\documentclass[11pt,a4paper]{article}

\usepackage{mathtools}
\usepackage{authblk} 
\usepackage{algpseudocode,algorithmicx,algorithm}

\usepackage{mathrsfs}
\usepackage{latexsym,bm}
\usepackage{amsmath,amsfonts,amssymb,amsthm}
\usepackage{extarrows}

\usepackage{graphicx,subfigure,epstopdf,float}
\usepackage{enumerate,cases,multirow}
\usepackage{makecell}
\usepackage{caption}

\usepackage{longtable,colortbl,arydshln,threeparttable}
\definecolor{mygray}{gray}{.9}

\usepackage{indentfirst}
\usepackage[top=25mm,bottom=20mm,left=25mm,right=20mm]{geometry}
\usepackage{cite}

\usepackage{listings}

\usepackage{makeidx}        
\usepackage{booktabs}
\usepackage[bookmarks,bookmarksnumbered,colorlinks,citecolor=red,linkcolor=red,hyperindex,linktocpage=true]{hyperref}

\newcommand{\ket}[1]{| #1 \rangle} 
\newcommand{\bra}[1]{\langle #1 |} 

\newcommand{\bb}{\boldsymbol}

\def \d {\mathrm{d}}
\def \e {\mathrm{e}}
\def \i {\mathrm{i}}

\DeclareMathOperator{\diag}{diag}

\newcounter{parentalgorithm}

\makeatother

\newtheorem{theorem}{Theorem}[section]
\newtheorem{lemma}{Lemma}[section]

\theoremstyle{remark}
\newtheorem{remark}{\bf Remark}[section]

\numberwithin{equation}{section}

\usepackage{qcircuit}
\usepackage{placeins}

\begin{document}

\title{\bfseries Schr\"odingerization for quantum linear systems problems \thanks{Submitted to the editors \today. The authors are listed in alphabetical order. All authors contributed equally to this work and share the same co-first authorship.}}

\author{Yin Yang\thanks{yangyinxtu@xtu.edu.cn} \footnote{Corresponding author.}}
\author{Yue Yu\thanks{terenceyuyue@xtu.edu.cn} \footnote{Corresponding author.}}
\author{Long Zhang\thanks{longzhang@smail.xtu.edu.cn}}
	
\affil{Hunan Key Laboratory for Computation and Simulation in Science and Engineering, Key Laboratory of Intelligent Computing and Information Processing of Ministry of Education, National Center for Applied Mathematics in Hunan, School of Mathematics and Computational Science, Xiangtan University, Xiangtan, Hunan 411105, China}

\maketitle

\begin{abstract}
We develop a Schr\"odingerization algorithm for quantum linear systems
problems in two higher dimensions. The earlier LC-Schr\"odingerization
approach represents the solution as a time integral of a homogeneous
convection response and implements this integral by a linear combination
of unitaries (LCU) over evolution times. We instead use Duhamel's
principle to incorporate the integral into an inhomogeneous convection
equation with zero initial data. Fourier projection in the convection
variable and Schr\"odingerization in a second auxiliary variable then
realize the solution without the separate LCU step. A joint choice of
the kernel and recovery procedure gives an evolution time independent
of the target accuracy, a uniformly bounded $L^2$ kernel normalization,
and accurate recovery on a fixed interval. We establish the
periodization and discretization bounds and analyze the interval
recovery probability. With block preconditioning, we retain a single
logarithmic precision factor and obtain linear condition-number
dependence without variable-time amplitude amplification. Under exact
oracle access and given a valid constant-factor solution-norm estimate,
the algorithm uses $\mathcal O(\kappa_A\log(1/\varepsilon))$ queries
to each original input oracle, with constant success probability and
$\ell^2$ state error at most $\varepsilon$. The matrix-query bound
matches the standard worst-case scaling when the supplied norm bounds
are tight. UnitaryLab simulations on positive-definite and indefinite
systems demonstrate the solution accuracy and probability gain from
interval recovery.
\end{abstract}

\textbf{Keywords}: Quantum linear systems problems, Schr\"odingerization, Block preconditioning

	
\section{Introduction}
Quantum computing offers a different approach to computation, with
algorithms that can substantially outperform classical methods for
certain problems \cite{Nielsen2010,HHL2009}. These advances
have motivated the development of quantum algorithms for scientific
computing. Solving linear systems is a fundamental task in this
setting, with applications across science and engineering. The cost
of classical solvers grows with the number of unknowns and can become
particularly demanding for systems obtained by discretizing
high-dimensional partial differential equations. Quantum linear
systems algorithms (QLSAs) approach this task by preparing a quantum
state proportional to the solution.

We consider the quantum linear systems problem (QLSP) for
\begin{equation}\label{Linearsystem}
A\bb x=\bb b,
\end{equation}
where $A\in\mathbb C^{N\times N}$ is invertible, $N=2^n$, and
$\bb b\ne\bb0$. Given a block encoding of $A$ and a preparation
oracle for $\ket b=\bb b/\|\bb b\|$, we seek a state
$\ket{\widetilde x}$ satisfying
\[
\ket x=\frac{A^{-1}\bb b}{\|A^{-1}\bb b\|},\qquad
\|\ket{\widetilde x}-\ket x\|\le\varepsilon,
\]
with a success flag of constant probability. The potential quantum
advantage in the system dimension relies on efficient input access,
favorable conditioning and accuracy requirements, and an application
that uses the solution state or a small number of its observables
without reconstructing every component of the solution
\cite{HHL2009,Lin2026QLSASurvey}.

We use the supplied bounds
$\alpha_A\ge\|A\|$ and $\alpha_{A^{-1}}\ge\|A^{-1}\|$, and write
$\kappa_A=\alpha_A\alpha_{A^{-1}}$.
Here $\alpha_A$ is the matrix block-encoding normalization.
We denote the solution norm for a normalized right-hand side by
\[
\xi=\|A^{-1}\ket b\|,\qquad
\frac1{\alpha_A}\le\xi\le\alpha_{A^{-1}}.
\]
The lower bound follows from $1\le\|A\|\xi$, and implies $\kappa_A\ge1$.
It suffices to consider Hermitian matrices with either sign of the
spectrum, since standard Hermitian dilation reduces a general
invertible system to this case. The precise oracle assumptions are stated in
Section~\ref{sec:algorithm}.

The algorithm of Harrow, Hassidim, and Lloyd~\cite{HHL2009} uses
phase estimation and eigenvalue inversion to prepare the solution
state. Its polynomial dependence on the condition number and inverse
precision motivated subsequent improvements. Ambainis introduced
variable-time amplitude amplification (VTAA) to obtain a nearly
linear dependence on the condition number~\cite{Ambainis2012VTAA}.
Childs, Kothari, and Somma replaced phase-estimation-based inversion
with Fourier or Chebyshev approximations, obtaining polylogarithmic
precision dependence and, with VTAA, nearly linear condition-number
dependence~\cite{Childs2017QLSA}. Quantum singular value transformation
provides a related approach through polynomial transformations of
block-encoded matrices~\cite{Gilyen2019QSVD}.

Another line of work prepares the solution through adiabatic
evolution, randomization, and eigenstate filtering
\cite{Subasi2019AQC,An-Lin-2022,Lin-Tong-2020}.
Costa et al.~\cite{Costa2022QLSA} use a discrete adiabatic theorem
to achieve linear condition-number and logarithmic precision
dependence without VTAA. More recent developments include
block preconditioning, which controls the costs of matrix access
and right-hand-side preparation~\cite{Low2026quantumlinearsystem},
and a solver based on solution-norm estimation and kernel reflection
\cite{dalzell2026shortcutoptimalquantumlinear}.
We refer to \cite{Lin2026QLSASurvey} for a broader account.
These developments provide the complexity benchmarks for the
evolution-based realization studied here.

An evolution equation provides another way to represent the solution.
For positive-definite Hermitian $A$, the relaxation equation
$\d\bb z/\d t=-A\bb z+\bb b$ approaches $A^{-1}\bb b$
\cite{HJZ2024multiscale}.
Schr\"odingerization converts non-unitary linear dynamics into a
Hamiltonian system in an enlarged space
\cite{JLY22SchrShort,JLY22SchrLong,JLMPY2025Schropt}.
The periodic initialization and weighted recovery developed in
\cite{DMPYSchrodingerizationODE} give an autonomous dissipative solver
that we can use for such a representation. However, the relaxation
equation above is unstable when $A$ has negative eigenvalues.
Replacing $A$ by $A^2$ restores decay, but squares the spectral
condition number. This motivates us to seek an evolution that treats
positive and negative eigenvalues directly.

The work introducing LC-Schr\"odingerization
\cite{yang2026linearcombinationschrodingerizationquantum} develops
an abstract framework that represents $A^{-1}$ through suitable
kernel functions and the definition of the Fourier transform.
The kernel used in \cite{Childs2017QLSA} provides one example within
this framework. The LC-Schr\"odingerization construction then expresses
the solution as a time integral of a homogeneous convection response.
For an appropriate odd kernel $\zeta$, let $\bb u$ solve
$\partial_t\bb u=A\partial_p\bb u$ with
$\bb u(0,p)=\zeta(p)\bb b$. Then
\[
A^{-1}\bb b=\lim_{T\to\infty}\int_0^T\bb u(t,0)\,\d t.
\]
Fourier discretization makes the convection evolution unitary, while
quadrature of the time integral is implemented by a linear combination
of unitaries (LCU) over the evolution times. The resulting algorithm
therefore treats propagation and integration as separate operations.
With a suitable kernel, interval recovery, and block preconditioning,
that approach already attains the optimal matrix-query scaling
$\mathcal O(\kappa_A\log(1/\varepsilon))$ under its oracle and
solution-norm estimation assumptions.

In this work, we develop a new realization of this inverse
representation that avoids the LCU step for time integration.
We incorporate the accumulated response into the dynamics by
considering
\[
\partial_t\bb v(t,p)=A\partial_p\bb v(t,p)+\zeta(p)\bb b,
\qquad \bb v(0,p)=\bb0.
\]
Duhamel's principle gives
$\bb v(T,p)=\int_0^T\bb u(t,p)\,\d t$, so the inhomogeneous
equation performs the required integration and $\bb v(T,0)$
approaches $A^{-1}\bb b$.
Both spectral signs are accommodated through the transport velocities.
After imposing periodic boundary conditions and using
Fourier projection in $p$, we obtain a finite inhomogeneous ODE.
We then apply the Schr\"odingerization method
\cite{JLY22SchrShort,JLY22SchrLong,JLMPY2025Schropt,DMPYSchrodingerizationODE},
introducing a second auxiliary variable $q$.
With the two auxiliary variables $p$ and $q$, we thus
realize Schr\"odingerization of the QLSP in two higher dimensions.
The inhomogeneous dynamics already contain the time-integrated response,
so no separate LCU step over evolution times is required.

The LCU-based construction already uses interval recovery to avoid
the probability loss from selecting a single grid point
\cite{yang2026linearcombinationschrodingerizationquantum}.
We retain this principle and adapt the kernel to the present
dynamical realization: we seek a response $\bb v(T,p)$ that is
uniformly close to $\bb x$ itself on a fixed interval, to any
prescribed accuracy.
The choice of kernel is essential: for the kernel
$\zeta(p)=p\e^{-p^2/2}$ used in the Fourier method
\cite{Childs2017QLSA}, the limiting solution is
$\e^{-p^2/2}\bb x$, so recovery of $\bb x$ to increasing accuracy
requires a shrinking neighbourhood of the origin.
We therefore choose the kernel and recovery procedure jointly.
Our smoothed-interval kernels have uniformly bounded $L^2$ norms,
and their tail integrals approach one uniformly on $[-1,1]$.
Adjusting the smoothing parameter allows us to improve the accuracy
while keeping both the recovery interval and the evolution time fixed.

Our main contributions are as follows.
\begin{itemize}
\item \textbf{Schr\"odingerization in two higher dimensions.}
The earlier LC-Schr\"odingerization method
\cite{yang2026linearcombinationschrodingerizationquantum} already
achieves optimal matrix-query scaling, but uses LCU to combine
homogeneous evolution states at different times.
We use Duhamel's principle to turn this time integral into an
inhomogeneous convection equation with zero initial data.
After Fourier projection in $p$, applying the Schr\"odingerization
method of \cite{DMPYSchrodingerizationODE} introduces a second
auxiliary variable $q$. We thus obtain a Schr\"odingerization
representation of the QLSP in two higher dimensions, avoiding the
separate LCU step for time integration.

\item \textbf{Joint choice of the kernel and recovery procedure.}
We construct a smoothed-interval kernel and recover the solution
coherently on the fixed interval $[-1,1]$. This choice gives
\[
T=5\alpha_{A^{-1}},\qquad \|\zeta\|_{L^2(\mathbb R)}=\mathcal O(1),
\]
independently of the target accuracy $\varepsilon$; increasing the
kernel parameter $r$ reduces the recovery error exponentially.
Theorems~\ref{thm:newkernel} and~\ref{thm:finite-time-recovery}
establish these properties. After discretization, the number of
accepted grid points compensates for the sampled kernel normalization,
so interval recovery introduces no additional probability loss from
grid refinement.

\item \textbf{Block preconditioning without VTAA.}
We apply the block preconditioner of \cite{Low2026quantumlinearsystem}
to reduce the dependence on $\kappa_A$ from quadratic to linear
without VTAA. The resulting algorithm retains the single factor of
$\log(1/\varepsilon)$ obtained from the kernel and recovery estimates
and the Schr\"odingerization solver.
Under the assumptions of Theorem~\ref{thm:optimal-queries}, the
algorithm uses
\[
\mathcal O\!\left(\kappa_A\log\frac1\varepsilon\right)
\]
queries to each original input oracle, with constant success
probability and $\ell^2$ state error at most $\varepsilon$.
This recovers the optimal matrix-query scaling within our
two-variable Schr\"odingerization realization.
\end{itemize}

Our query bounds assume exact matrix block encodings and exact
preparation of the input and auxiliary profile states, with the
required controlled and inverse access. The preconditioning step
uses a valid constant-factor estimate of $\xi$; we discuss this assumption in
Section~\ref{subsec:block-preconditioning}.
The supplied norm bounds determine $\kappa_A$, which agrees with
the spectral condition number when these bounds are tight.
We illustrate the construction with UnitaryLab circuits for
positive-definite and indefinite systems, examining solution errors,
discretization refinement, and recovery probabilities. The experiments
use a product formula for Hamiltonian evolution and test the
representation and recovery rather than the asymptotic query bound.

We organize the paper as follows. Section~\ref{sec:kernel} develops
the convection representation and the smoothed-interval kernel.
Section~\ref{sec:discretization} establishes periodization, Fourier
discretization, and interval recovery estimates.
Section~\ref{sec:algorithm} gives the Schr\"odingerization algorithm
and its query complexity, including block preconditioning.
Section~\ref{sec:numerics} presents the numerical experiments,
followed by the conclusions.

\paragraph{Notation.}
We use $\|\cdot\|$ for the Euclidean norm of a vector and the induced
operator norm of a matrix.
For functions, $\|\cdot\|_2$ and $\|\cdot\|_\infty$ abbreviate the
$L^2$ and $L^\infty$ norms on the stated domain; kernel norms without
an explicit domain are taken over $\mathbb R$.
The symbols $A^\top$ and $A^\dagger$ denote the transpose and conjugate
transpose, respectively; $I$ denotes the identity of the appropriate
dimension, and $|\mathcal I|$ denotes the cardinality of a finite set.
Unless stated otherwise, $\log$ denotes the natural logarithm.

For nonnegative quantities $f$ and $g$, we write $f=\mathcal O(g)$
if $f\le Cg$, $f=\Omega(g)$ if $f\ge cg$, and $f=\Theta(g)$ if
$cg\le f\le Cg$, for positive constants $c$ and $C$ in the parameter
regime under consideration. Unless stated otherwise, these constants
are independent of the problem size, accuracy, and other varying
parameters. We also use $c$ and $C$ for generic positive constants
whose values may change from one occurrence to another.

\section{Convection representation of quantum linear systems problems}\label{sec:kernel}
By standard Hermitian dilation, we may assume that $A$ is Hermitian.
Duhamel's principle leads to a
convection representation whose solution approximates $\bb x=A^{-1}\bb b$
near the origin. We then choose the kernel and recovery interval
jointly to keep the evolution time independent of the accuracy and
the kernel's $L^2$ norm uniformly bounded.

\subsection{Convection representation and local recovery}

Our first aim is to represent $\bb x$ by the solution of a convection
equation. We require convergence to $\bb x$ at the origin and, for a
given accuracy, an approximation to the same vector throughout a
neighbourhood of the origin. The following theorem gives both properties
and tells us how to choose the recovery interval and evolution time.

\begin{theorem}[Convection representation and approximate interval recovery]
\label{thm:discreteFouriersource}
Let $A$ be an invertible Hermitian matrix, $\bb b\ne\bb0$, and
$\bb x=A^{-1}\bb b$. Let $\zeta\in C^1(\mathbb R)\cap L^1(\mathbb R)$
be odd and satisfy $\int_0^\infty\zeta(s)\,\d s=1$. Define
\[
F(p)=\int_p^\infty\zeta(s)\,\d s.
\]
The solution of
\begin{equation}\label{vtpsource}
\begin{cases}
\partial_t\bb v(t,p)=A\partial_p\bb v(t,p)+\zeta(p)\bb b,\\
\bb v(0,p)=\bb0,
\end{cases}
\end{equation}
satisfies
\begin{equation}\label{integraltt}
\bb x=\lim_{T\to\infty}\bb v(T,0).
\end{equation}

Moreover, for $a>0$, $\alpha_{A^{-1}}\ge\|A^{-1}\|$, and
$T\ge a\alpha_{A^{-1}}$,
\begin{equation}\label{abstract-interval-error}
\sup_{|p|\le a}\|\bb v(T,p)-\bb x\|
\le\left(
\sup_{|p|\le a}|1-F(p)|
+\sup_{|s|\ge T/\alpha_{A^{-1}}-a}|F(s)|
\right)\|\bb x\|.
\end{equation}
In particular, for $0<\delta<1$, choose $a_\delta,B_\delta>0$ such that
\begin{equation}\label{recovery-radius-choice}
\sup_{|p|\le a_\delta}|1-F(p)|\le\frac\delta2,
\qquad
\sup_{|s|\ge B_\delta}|F(s)|\le\frac\delta2.
\end{equation}
Such choices always exist. With
\begin{equation}\label{recovery-time-choice}
T_\delta=\alpha_{A^{-1}}(a_\delta+B_\delta),
\end{equation}
we have
\[
\sup_{|p|\le a_\delta}\|\bb v(T,p)-\bb x\|
\le\delta\|\bb x\|\qquad(T\ge T_\delta).
\]
\end{theorem}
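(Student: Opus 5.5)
Since $A$ is Hermitian, we diagonalize $A=\sum_j\lambda_j\bb e_j\bb e_j^\dagger$ with real $\lambda_j\ne0$ and write $\bb b=\sum_j b_j\bb e_j$. The system \eqref{vtpsource} then decouples into scalar transport equations $\partial_t v_j=\lambda_j\partial_p v_j+\zeta(p)b_j$ with zero initial data. Along the characteristics $p+\lambda_j(t-s)$, Duhamel's principle gives
\[
v_j(T,p)=b_j\int_0^T\zeta(p+\lambda_j t)\,\d t=\frac{b_j}{\lambda_j}\int_p^{p+\lambda_jT}\zeta(s)\,\d s=\frac{b_j}{\lambda_j}\bigl(F(p)-F(p+\lambda_jT)\bigr).
\]
Since $\zeta\in C^1$, this is a classical solution, and uniqueness for linear transport follows from the same characteristic argument. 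Because $\zeta$ is odd with $\int_0^\infty\zeta=1$, we have $F(0)=1$, and $F(s)\to0$ as $s\to\pm\infty$. For $s\to+\infty$ this is the tail of an $L^1$ integral. For $s\to-\infty$, oddness gives $F(s)=\int_s^\infty\zeta=\int_{-\infty}^{\infty}\zeta-\int_{-\infty}^{s}\zeta=-\int_{-\infty}^s\zeta\to0$, using $\int_{\mathbb R}\zeta=0$. Hence $v_j(T,0)\to b_j/\lambda_j$, which proves \eqref{integraltt}.

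For the interval estimate we write, componentwise,
\[
\bb v(T,p)-\bb x=\sum_j\frac{b_j}{\lambda_j}\bigl[(F(p)-1)-F(p+\lambda_jT)\bigr]\bb e_j.
\]
The first part equals $(F(p)-1)\bb x$, whose norm is at most $\sup_{|p|\le a}|1-F(p)|\,\|\bb x\|$. For the second part we use $|\lambda_j|\ge1/\|A^{-1}\|\ge1/\alpha_{A^{-1}}$ to get $|p+\lambda_jT|\ge|\lambda_j|T-|p|\ge T/\alpha_{A^{-1}}-a\ge0$ whenever $|p|\le a$ and $T\ge a\alpha_{A^{-1}}$. Each coefficient $|F(p+\lambda_jT)|$ is then bounded by $\sup_{|s|\ge T/\alpha_{A^{-1}}-a}|F(s)|$. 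Orthogonality of the eigenvectors bounds the second vector by this supremum times $\|\bb x\|$, and the triangle inequality gives \eqref{abstract-interval-error}. The one step needing care is this uniform lower bound on $|p+\lambda_jT|$ for both spectral signs; it is exactly where the hypothesis $T\ge a\alpha_{A^{-1}}$ enters.

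For existence of the choices in \eqref{recovery-radius-choice}, note that $F$ is continuous with $F(0)=1$, so some $a_\delta>0$ satisfies the first condition. Since $F(s)\to0$ as $|s|\to\infty$, some $B_\delta>0$ satisfies the second. With $T\ge T_\delta=\alpha_{A^{-1}}(a_\delta+B_\delta)$ we have $T\ge a_\delta\alpha_{A^{-1}}$ and $T/\alpha_{A^{-1}}-a_\delta\ge B_\delta$. The two suprema in \eqref{abstract-interval-error} are therefore each at most $\delta/2$, which gives $\sup_{|p|\le a_\delta}\|\bb v(T,p)-\bb x\|\le\delta\|\bb x\|$.
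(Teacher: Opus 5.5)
Your proposal is correct and follows essentially the same route as the paper: the eigen-decomposition and characteristics give $v_j(T,p)=\frac{b_j}{\lambda_j}\bigl(F(p)-F(p+\lambda_jT)\bigr)$, i.e.\ $\bb v(T,p)=[F(p)I-F(pI+TA)]\bb x$, followed by the same triangle-inequality split, the bound $|p+\lambda_jT|\ge T/\alpha_{A^{-1}}-a$, and continuity and decay of $F$ for the choice of $a_\delta,B_\delta$. Your use of $\int_{\mathbb R}\zeta=0$ to show $F(s)\to0$ as $s\to-\infty$ makes explicit a step the paper only asserts; note also that the hypothesis $T\ge a\alpha_{A^{-1}}$ only makes the threshold nonnegative, since the estimate holds trivially otherwise, so it is not essential to the bound as you suggest.
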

\begin{proof}
Integrability, oddness, and the half-line normalization imply that
$F(0)=1$, $F$ is even, and $F(p)\to0$ as $|p|\to\infty$.
To derive the solution formula, we introduce the auxiliary homogeneous
problem
\begin{equation}\label{homogeneous-response}
\begin{cases}
\partial_t\bb u(t,p)=A\partial_p\bb u(t,p),\\
\bb u(0,p)=\zeta(p)\bb b.
\end{cases}
\end{equation}
We write $A=\sum_j\lambda_j\bb e_j\bb e_j^\dagger$
in an orthonormal eigenbasis and set $b_j=\bb e_j^\dagger\bb b$.
Characteristics give
$u_j(t,p)=\zeta(p+\lambda_jt)b_j$ in the $j$th eigenspace,
so $\bb u(t,p)=\zeta(pI+tA)\bb b$.
By Duhamel's principle, the solution of \eqref{vtpsource} is
\[
\bb v(t,p)=\int_0^t\bb u(s,p)\,\d s.
\]
Indeed, differentiation under the integral gives
\[
A\partial_p\bb v(t,p)
=\int_0^t\partial_s\bb u(s,p)\,\d s
=\bb u(t,p)-\zeta(p)\bb b
=\partial_t\bb v(t,p)-\zeta(p)\bb b,
\]
and $\bb v(0,p)=\bb0$. Uniqueness follows by applying
characteristics to the difference of two solutions, which satisfies
the homogeneous equation with zero initial data.
Consequently, $v_j(T,p)=b_j\int_0^T\zeta(p+\lambda_jt)\,\d t$.
Since $F'=-\zeta$ and $\lambda_j\ne0$,
\[
v_j(T,p)=\frac{b_j}{\lambda_j}
\bigl[F(p)-F(p+\lambda_jT)\bigr].
\]
Summing over the eigenspaces gives the identity
\begin{equation}\label{convection-solution}
\bb v(T,p)=\int_0^T\zeta(pI+sA)\bb b\,\d s
=\bigl[F(p)I-F(pI+TA)\bigr]\bb x,
\end{equation}
valid for both signs of the spectrum. Setting $p=0$ and using $F(0)=1$
gives
\[
\|\bb v(T,0)-\bb x\|
\le\max_{\lambda\in\operatorname{spec}(A)}|F(T\lambda)|\,\|\bb x\|.
\]
Since every eigenvalue is nonzero and $F$ vanishes at both infinities,
this proves \eqref{integraltt}.

For interval recovery, the error is
\[
\bb v(T,p)-\bb x
=\bigl[(F(p)-1)I-F(pI+TA)\bigr]\bb x.
\]
To bound the two contributions, we set
\begin{equation}\label{abstract-recovery-errors}
\eta_a=\sup_{|p|\le a}|1-F(p)|,\qquad
\theta_a(T)=\sup_{|p|\le a}\max_{\lambda\in\operatorname{spec}(A)}
|F(p+T\lambda)|.
\end{equation}
The spectral theorem and the triangle inequality bound the uniform
error by $[\eta_a+\theta_a(T)]\|\bb x\|$.
Since $|p+T\lambda|\ge T/\alpha_{A^{-1}}-a$,
\[
\theta_a(T)\le\sup_{|s|\ge T/\alpha_{A^{-1}}-a}|F(s)|,
\]
which proves \eqref{abstract-interval-error}.
Finally, continuity and $F(0)=1$ guarantee the first choice in
\eqref{recovery-radius-choice}; decay of $F$ at both infinities
guarantees the second. For $T\ge T_\delta$ as defined in
\eqref{recovery-time-choice},
$T/\alpha_{A^{-1}}-a_\delta\ge B_\delta$.
Both terms in \eqref{abstract-interval-error} are then at most
$\delta/2$, proving the final assertion.
\end{proof}

\begin{remark}[Kernel choice for fixed-interval recovery]
\label{rem:interval-accuracy}
We need to choose the kernel so that we can recover $\bb x$ throughout
an interval of fixed positive width, independent of the target accuracy.
After discretization, retaining only $p=0$ loses success probability
as the grid is refined. Allowing the recovery interval to shrink with
the error tolerance also reduces the accepted probability and increases
the amplitude-amplification cost. A fixed interval lets us accept more
grid points as the mesh is refined, compensating for the normalization
of the sampled kernel.

The standard Gaussian choice $\zeta(p)=p\e^{-p^2/2}$ does not meet this
requirement. Its tail integral is $F(p)=\e^{-p^2/2}$, and
$\lim_{T\to\infty}\bb v(T,p)=F(p)\bb x$. On $[-a,a]$, the limiting
relative error is therefore $1-\e^{-a^2/2}$; making it at most $\delta$
requires $a=\mathcal O(\sqrt\delta)$. Increasing $T$ cannot remove
this spatial error. We therefore choose a different kernel below,
adjusting its smoothing parameter so that $F$ stays sufficiently close
to one on the fixed interval $[-1,1]$.
\end{remark}

\subsection{A smoothed-interval kernel}

We choose $[-1,1]$ as our recovery interval and seek a kernel whose
tail integral is nearly one there, as motivated by
Remark~\ref{rem:interval-accuracy}. The kernel must also have a
uniformly bounded $L^2$ norm as we improve the accuracy.
We construct it in three steps.
\begin{itemize}
\item \textbf{Initial profile.}
We start with the odd function
\[
f(p)=\mathbf{1}_{[2,3]}(p)-\mathbf{1}_{[-3,-2]}(p).
\]
It has unit positive-half-line mass, and its tail integral is exactly
one for $|p|\le2$. Our recovery interval therefore lies a unit distance
inside this constant region. The endpoints $2$ and $3$ are convenient
fixed choices, independent of the matrix and target accuracy.

\item \textbf{Gaussian smoothing.}
We smooth $f$ by convolution with the normalized Gaussian
\[
G_\sigma(p)=\frac{\e^{-p^2/(2\sigma^2)}}{\sqrt{2\pi}\sigma},
\qquad \sigma=(2r)^{-1/2},\quad r\ge2\ \text{an integer}.
\]
Writing the convolution explicitly, we obtain
\begin{equation}\label{kernel-convolution}
\begin{aligned}
\widetilde\zeta_\sigma(p)
&:=(G_\sigma*f)(p)
=\int_{\mathbb R}G_\sigma(p-y)f(y)\,\d y\\
&=\int_2^3G_\sigma(p-y)\,\d y
-\int_{-3}^{-2}G_\sigma(p-y)\,\d y\\
&=\int_2^3\bigl[G_\sigma(p-y)-G_\sigma(p+y)\bigr]\,\d y.
\end{aligned}
\end{equation}
The resulting function is smooth and has decaying tails. Gaussian leakage across the unit
separation is controlled by $\exp(-1/(2\sigma^2))=\e^{-r}$.

\item \textbf{Unit positive-half-line integral.}
Gaussian smoothing preserves oddness, but the integral of
$\widetilde\zeta_\sigma$ over $(0,\infty)$ is no longer exactly one.
We denote this integral by $m_\sigma$ and set
$\zeta=\widetilde\zeta_\sigma/m_\sigma$, so that
$\int_0^\infty\zeta(p)\,\d p=1$. This ensures $F(0)=1$ and hence
$\lim_{T\to\infty}\bb v(T,0)=\bb x$ in
Theorem~\ref{thm:discreteFouriersource}.
\end{itemize}
The positive-half-line integral has the explicit expression
\[
m_\sigma=\int_0^\infty\widetilde\zeta_\sigma(p)\,\d p
=\int_2^3\operatorname{erf}
\left(\frac{y}{\sqrt2\sigma}\right)\d y,
\]
where $\operatorname{erf}(z)=\frac2{\sqrt\pi}\int_0^z\e^{-t^2}\d t$.
We use the Fourier convention
\[
\hat\varphi(p)=\int_{\mathbb R}\varphi(k)\e^{-\i kp}\,\d k.
\]
With this convention, the kernel pair is
\begin{equation}\label{varphik}
\begin{aligned}
	\zeta(p)=\hat{\varphi}(p)
	&=\frac1{m_\sigma}\int_2^3
	\bigl(G_\sigma(p-y)-G_\sigma(p+y)\bigr)\d y,\\
	\varphi(k)&=\frac{\i}{\pi m_\sigma}\e^{-\sigma^2k^2/2}
	\int_2^3\sin(ky)\d y .
\end{aligned}
\end{equation}
The separation of the two signed intervals from the origin is what
allows the tail integral to remain nearly constant on $[-1,1]$.

\begin{figure}[tbp]
\centering
\subfigure[Source kernel.]{%
\includegraphics[width=0.32\textwidth]{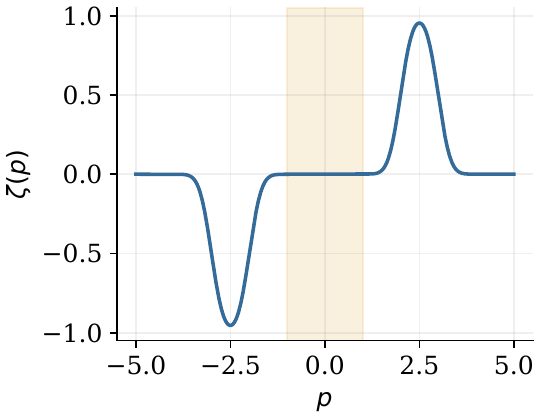}%
\label{fig:kernel-source}}\hfill
\subfigure[Fourier kernel.]{%
\includegraphics[width=0.32\textwidth]{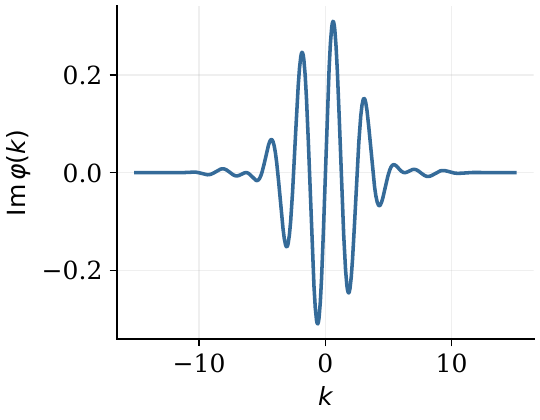}%
\label{fig:kernel-fourier}}\hfill
\subfigure[Recovery profile.]{%
\includegraphics[width=0.32\textwidth]{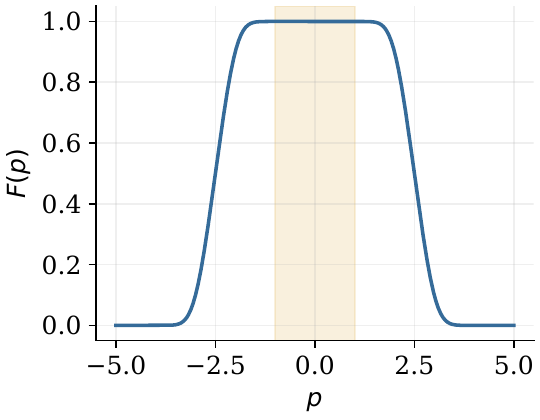}%
\label{fig:kernel-recovery}}
\caption{The Fourier kernel pair in \eqref{varphik} at the fixed illustrative
value $\sigma=1/4$ ($r=8$).
\subref{fig:kernel-source} $\zeta(p)=\hat\varphi(p)$.
\subref{fig:kernel-fourier} $\operatorname{Im}\varphi(k)$; the real part vanishes identically.
\subref{fig:kernel-recovery} The tail integral $F(p)$. Shading marks the recovery
interval $[-1,1]$, where $F$ is nearly one. Gaussian smoothing gives
decaying tails, rather than compact support.}
\label{fig:kernel-pair}
\end{figure}

The integer $r$ controls the smoothing width and the recovery accuracy. The next theorem bounds the variation and decay of $F$ and records the norm and regularity of the kernel. In Section~\ref{sec:discretization}, we use its explicit Fourier coefficients to control the projection error.

\begin{theorem}[Properties of the smoothed-interval kernel]\label{thm:newkernel}
Let $r\ge2$ be an integer, $\sigma=(2r)^{-1/2}$, and let
$\zeta$ and $\varphi$ be defined by \eqref{varphik}. For their tail
integral $F$ in Theorem~\ref{thm:discreteFouriersource} and the constant
$c_G=\operatorname{erf}(1/\sqrt2)$, the following properties hold.
\begin{itemize}
\item The functions $\zeta$ and $\varphi$ form a Fourier transform pair, and
$\zeta\in C^\infty(\mathbb R)\cap L^1(\mathbb R)$.
The function $\zeta$ is odd, positive on $(0,\infty)$,
and satisfies
\[
\int_0^\infty\zeta(p)\d p=1,\qquad
c_G\le m_\sigma\le1.
\]
\item The function $F$ satisfies
\begin{equation}\label{newkernel-recovery}
0\le1-F(p)\le\frac{\e^{-r}}{2c_G},\qquad |p|\le1,
\end{equation}
and
\begin{equation}\label{newkernel-tail}
	0\le F(p)
	\le \frac1{2c_G}\exp\left(-\frac{(|p|-3)^2}{2\sigma^2}\right),
\qquad |p|\ge3.
\end{equation}
\item For every integer $k\ge1$,
\begin{equation}\label{newkernel-derivatives}
	\|\zeta\|_{L^2}\le\frac{\sqrt2}{c_G},\qquad
	\|\zeta^{(k)}\|_{L^2}
	\le\frac{\sqrt2}{c_G}
	\left(\frac{\sqrt{k}}{\sigma\sqrt{\e}}\right)^k .
\end{equation}
\end{itemize}
All constants are independent of $r$.
\end{theorem}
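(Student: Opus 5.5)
The proof is a direct computation from the convolution formula \eqref{kernel-convolution}, organized around one identity: the tail integral of the smoothed kernel is the smoothed tail integral of $f$. Let $F_f(p)=\int_p^\infty f(s)\,\d s$. Then $F_f$ is even, equals $1$ for $|p|\le2$, decreases linearly to $0$ on $[2,3]$, and vanishes for $|p|\ge3$. Since convolution commutes with integration against the $L^1$ function $G_\sigma$, we get
\[
F(p)=\frac{1}{m_\sigma}(G_\sigma*F_f)(p),\qquad 0\le F_f\le \mathbf 1_{[-3,3]},\qquad F_f=1 \text{ on } [-2,2].
\]

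\textbf{First bullet.} Oddness of $\zeta$ follows from the last line of \eqref{kernel-convolution}. For $p>0$ and $y>0$ we have $|p-y|<p+y$, so $G_\sigma(p-y)>G_\sigma(p+y)$ and $\zeta$ is positive on $(0,\infty)$. Smoothness and integrability are inherited from $G_\sigma$, since $|f|$ is bounded and compactly supported. The Fourier pair is checked by computing $\hat G_\sigma(p)=\e^{-\sigma^2p^2/2}$ and $\hat f$ under the stated convention and applying the convolution theorem; equivalently, one inverts $\zeta$ to obtain $\varphi$. This is routine bookkeeping. For the mass, $m_\sigma=\int_2^3\operatorname{erf}(y/(\sqrt2\sigma))\,\d y$ is at most $1$ because $\operatorname{erf}\le1$. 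For the lower bound, $y/(\sqrt2\sigma)\ge 2\sqrt r\ge 2\sqrt2>1/\sqrt2$ and $\operatorname{erf}$ is increasing, so each integrand is at least $\operatorname{erf}(1/\sqrt2)=c_G$ and hence $m_\sigma\ge c_G$.

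\textbf{Second bullet.} We first need $F\ge0$. For $p\ge0$ this holds because $\zeta>0$ on $(p,\infty)$, and it extends to $p<0$ because $F$ is even.

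For $|p|\le1$ we need $1-F(p)\ge0$. Since $F(0)=1$ and $F'=-\zeta$, the function $F$ decreases on $[0,\infty)$, so $F\le1$. The difficulty is the upper bound, because $m_\sigma<1$ leaves a normalization mismatch. Write
\[
1-F(p)=\frac{m_\sigma-(G_\sigma*F_f)(p)}{m_\sigma}.
\]
The Gaussian lower bound $(G_\sigma*F_f)(p)\ge \mathbb P(|p+\sigma Z|\le2)$ alone does not control the numerator, since $m_\sigma$ is itself slightly less than $1$. Instead we take the exact difference of the two integrals:
\[
m_\sigma-(G_\sigma*F_f)(p)=\int_0^p\widetilde\zeta_\sigma(s)\,\d s .
\]
For $|s|\le1$ and $y\in[2,3]$, we have $|s-y|\ge1$, so
\[
|\widetilde\zeta_\sigma(s)|\le\int_2^3G_\sigma(s-y)\,\d y\le\mathbb P\bigl(\sigma Z\ge 1\bigr)\le \tfrac12\e^{-1/(2\sigma^2)}=\tfrac12\e^{-r}.
\]
This pointwise bound integrates to $|p|\,\e^{-r}/2$, which is at most $\e^{-r}/2$ on $|p|\le1$. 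Dividing by $m_\sigma\ge c_G$ gives \eqref{newkernel-recovery}.

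For $|p|\ge3$, the bound $F_f\le\mathbf 1_{[-3,3]}$ gives $(G_\sigma*F_f)(p)\le\mathbb P\bigl(\sigma Z\ge |p|-3\bigr)\le\tfrac12\e^{-(|p|-3)^2/(2\sigma^2)}$. Dividing by $m_\sigma$ yields \eqref{newkernel-tail}. Both steps use the standard Gaussian tail bound $\mathbb P(Z\ge t)\le\tfrac12\e^{-t^2/2}$ for $t\ge0$.

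\textbf{Third bullet.} We use Plancherel with the computed $\varphi$. The identity $\|\zeta^{(k)}\|_2^2=2\pi\int k^{2k}|\varphi(k)|^2\,\d k$ is awkward because the exponent and the variable share a letter, so we write it with frequency $\omega$. From $|\int_2^3\sin(\omega y)\,\d y|\le1$ we get $|\varphi(\omega)|\le\e^{-\sigma^2\omega^2/2}/(\pi m_\sigma)$. Since $\zeta^{(k)}=G_\sigma*f^{(k)}$ only distributionally, it is cleaner to bound in Young's form instead: $\|\zeta\|_2\le\|f\|_2/m_\sigma=\sqrt2/m_\sigma$, because $\|G_\sigma\|_1=1$.

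For derivatives, we write $\widehat{\zeta^{(k)}}$ as $(\i\omega)^k$ times the transform of $\zeta$, and split the Gaussian factor as $\e^{-\sigma^2\omega^2/2}$ multiplied against the transform of $f$. We then bound $\sup_\omega|\omega|^k\e^{-\sigma^2\omega^2/2}=\bigl(\sqrt{k}/(\sigma\sqrt\e)\bigr)^k$, which is attained at $\omega^2=k/\sigma^2$, and apply Plancherel to the remaining factor $\hat f$, which has $L^2$ norm $\sqrt2$. This gives $\|\zeta^{(k)}\|_2\le\bigl(\sqrt k/(\sigma\sqrt\e)\bigr)^k\sqrt2/m_\sigma$, and $m_\sigma\ge c_G$ finishes \eqref{newkernel-derivatives}.

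The main obstacle is the upper bound in \eqref{newkernel-recovery}. The normalization $m_\sigma<1$ interacts with the leakage, and it has to be handled through the exact identity $m_\sigma-(G_\sigma*F_f)(p)=\int_0^p\widetilde\zeta_\sigma$ rather than through crude probability bounds.
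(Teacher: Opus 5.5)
Your proposal is correct and follows essentially the paper's proof: erf normalization for $m_\sigma$, monotonicity of $G_\sigma$ for positivity, the identity $1-F(p)=\int_0^p\zeta(s)\,\d s$ with a Gaussian tail bound for \eqref{newkernel-recovery} and \eqref{newkernel-tail}, and Plancherel with $\sup_{u\ge0}u^{2k}\e^{-\sigma^2u^2}$ for the derivatives; your $F=m_\sigma^{-1}G_\sigma*F_f$ packaging and your use of Young's inequality for $\|\zeta\|_{L^2}$ are only cosmetic variations. One small fix is needed: the pointwise bound $|\widetilde\zeta_\sigma(s)|\le\int_2^3G_\sigma(s-y)\,\d y$ is false for $s<0$, where $G_\sigma(s+y)$ dominates, so either restrict to $p\ge0$ using the evenness of $F$ (as the paper does) or bound by the larger of the two translates, which satisfies the same $\tfrac12\e^{-r}$ estimate because $|s+y|\ge1$ as well.
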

\begin{proof}
We first verify the normalization and the Fourier transform pair.
For $y>0$, a change of variables gives
\begin{align*}
\int_0^\infty\bigl[G_\sigma(p-y)-G_\sigma(p+y)\bigr]\,\d p
&=\int_{-y}^y G_\sigma(s)\,\d s\\
&=\frac2{\sqrt\pi}\int_0^{y/(\sqrt2\sigma)}\e^{-u^2}\,\d u
=\operatorname{erf}\left(\frac{y}{\sqrt2\sigma}\right).
\end{align*}
Since $\sigma\le1$ and $2\le y\le3$, we obtain
\[
c_G\le\operatorname{erf}\left(\frac{y}{\sqrt2\sigma}\right)\le1,
\qquad
c_G\le m_\sigma=\int_2^3\operatorname{erf}
\left(\frac{y}{\sqrt2\sigma}\right)\,\d y\le1.
\]
Consequently,
\[
\int_0^\infty\zeta(p)\,\d p
=\frac1{m_\sigma}\int_2^3\operatorname{erf}
\left(\frac{y}{\sqrt2\sigma}\right)\,\d y=1.
\]
The evenness of $G_\sigma$ and its strict decrease on $(0,\infty)$ give
\[
\zeta(-p)=-\zeta(p),\qquad
G_\sigma(p-y)>G_\sigma(p+y)\quad(p,y>0),
\]
so $\zeta(p)>0$ for $p>0$.
Differentiating under the integral on the bounded interval $[2,3]$,
we have, for every integer $k\ge0$,
\[
\zeta^{(k)}(p)=\frac1{m_\sigma}\int_2^3
\bigl[G_\sigma^{(k)}(p-y)-G_\sigma^{(k)}(p+y)\bigr]\,\d y.
\]
Thus $\zeta\in C^\infty(\mathbb R)$. Using
$\zeta=m_\sigma^{-1}G_\sigma*f$ and Young's inequality, we also obtain
\[
\|\zeta\|_{L^1}
\le\frac{\|G_\sigma\|_{L^1}\|f\|_{L^1}}{m_\sigma}
=\frac2{m_\sigma}<\infty.
\]
For the Fourier transform, the Gaussian integral yields
\[
\frac1{2\pi}\int_{\mathbb R}G_\sigma(p-y)\e^{\i\omega p}\,\d p
=\frac1{2\pi}\e^{\i\omega y}\e^{-\sigma^2\omega^2/2}.
\]
Applying this identity to the two translates in $\zeta$, we find
\begin{align*}
\frac1{2\pi}\int_{\mathbb R}\zeta(p)\e^{\i\omega p}\,\d p
&=\frac{\e^{-\sigma^2\omega^2/2}}{2\pi m_\sigma}
\int_2^3\bigl(\e^{\i\omega y}-\e^{-\i\omega y}\bigr)\,\d y\\
&=\frac{\i}{\pi m_\sigma}\e^{-\sigma^2\omega^2/2}
\int_2^3\sin(\omega y)\,\d y=\varphi(\omega),
\end{align*}
which proves the Fourier transform relation in \eqref{varphik}.

We next estimate $F$. For $z\ge0$, substitution $u=z+v$ gives
\begin{align*}
\frac1{\sqrt{2\pi}}\int_z^\infty\e^{-u^2/2}\,\d u
&=\frac{\e^{-z^2/2}}{\sqrt{2\pi}}
\int_0^\infty\e^{-zv-v^2/2}\,\d v\\
&\le\frac{\e^{-z^2/2}}{\sqrt{2\pi}}
\int_0^\infty\e^{-v^2/2}\,\d v
=\frac12\e^{-z^2/2}.
\end{align*}
By oddness and integrability of $\zeta$, its integral over $\mathbb R$
vanishes, and hence
\[
F(-p)=-\int_{-\infty}^{-p}\zeta(s)\,\d s
=\int_p^\infty\zeta(s)\,\d s=F(p).
\]
It therefore suffices to consider $p\ge0$.
For $0\le p\le1$, we use $F(0)=1$, positivity of $\zeta$, and the
Gaussian tail bound to obtain
\begin{align*}
0\le1-F(p)
&=\frac1{m_\sigma}\int_2^3\int_0^p
\bigl[G_\sigma(s-y)-G_\sigma(s+y)\bigr]\,\d s\,\d y\\
&\le\frac1{m_\sigma}\int_2^3\int_{-\infty}^1
G_\sigma(s-y)\,\d s\,\d y\\
&=\frac1{m_\sigma}\int_2^3\frac1{\sqrt{2\pi}}
\int_{(y-1)/\sigma}^\infty\e^{-u^2/2}\,\d u\,\d y\\
&\le\frac1{2m_\sigma}\int_2^3
\e^{-(y-1)^2/(2\sigma^2)}\,\d y
\le\frac{\e^{-1/(2\sigma^2)}}{2c_G}
=\frac{\e^{-r}}{2c_G}.
\end{align*}
Together with the evenness of $F$, this proves \eqref{newkernel-recovery}.
For $p\ge3$, the same tail estimate gives
\begin{align*}
0\le F(p)
&=\frac1{m_\sigma}\int_2^3\int_p^\infty
\bigl[G_\sigma(s-y)-G_\sigma(s+y)\bigr]\,\d s\,\d y\\
&\le\frac1{m_\sigma}\int_2^3\frac1{\sqrt{2\pi}}
\int_{(p-y)/\sigma}^\infty\e^{-u^2/2}\,\d u\,\d y\\
&\le\frac1{2m_\sigma}\int_2^3
\e^{-(p-y)^2/(2\sigma^2)}\,\d y
\le\frac1{2c_G}\e^{-(p-3)^2/(2\sigma^2)}.
\end{align*}
Evenness then gives \eqref{newkernel-tail} for $|p|\ge3$.

Finally, we derive the norm and derivative estimates explicitly.
Writing the inverse Fourier transform of the initial profile as
\[
h_f(\omega):=\frac1{2\pi}\int_{\mathbb R}f(p)\e^{\i\omega p}\,\d p
=\frac{\i}{\pi}\int_2^3\sin(\omega y)\,\d y.
\]
The convolution formula and Plancherel's identity give
\[
\varphi(\omega)=\frac1{m_\sigma}\e^{-\sigma^2\omega^2/2}h_f(\omega),
\qquad
2\pi\int_{\mathbb R}|h_f(\omega)|^2\,\d\omega
=\|f\|_{L^2}^2=2.
\]
In particular,
\[
\|\zeta\|_{L^2}^2
=\frac{2\pi}{m_\sigma^2}\int_{\mathbb R}
\e^{-\sigma^2\omega^2}|h_f(\omega)|^2\,\d\omega
\le\frac2{m_\sigma^2}\le\frac2{c_G^2}.
\]
For each integer $k\ge1$, the Gaussian factor makes
$\omega^k\varphi(\omega)$ integrable and square integrable, so we may
differentiate the Fourier representation to obtain
\[
\zeta^{(k)}(p)=\int_{\mathbb R}(-\i\omega)^k
\varphi(\omega)\e^{-\i\omega p}\,\d\omega.
\]
Applying Plancherel's identity once more, we find
\begin{align*}
\|\zeta^{(k)}\|_{L^2}^2
&=\frac{2\pi}{m_\sigma^2}\int_{\mathbb R}
|\omega|^{2k}\e^{-\sigma^2\omega^2}|h_f(\omega)|^2\,\d\omega\\
&\le\frac{2\pi}{m_\sigma^2}
\left(\sup_{u\ge0}u^{2k}\e^{-\sigma^2u^2}\right)
\int_{\mathbb R}|h_f(\omega)|^2\,\d\omega\\
&=\frac2{m_\sigma^2}
\sup_{u\ge0}u^{2k}\e^{-\sigma^2u^2}.
\end{align*}
For $u>0$, the derivative of the logarithm is
\[
\frac{\d}{\d u}\log\bigl(u^{2k}\e^{-\sigma^2u^2}\bigr)
=\frac{2k}{u}-2\sigma^2u.
\]
It is positive for $u<\sqrt{k}/\sigma$ and negative for
$u>\sqrt{k}/\sigma$, so
\[
\sup_{u\ge0}u^{2k}\e^{-\sigma^2u^2}
=\left(\frac{k}{\sigma^2}\right)^k\e^{-k}.
\]
Taking square roots and using $m_\sigma\ge c_G$, we conclude that
\[
\|\zeta^{(k)}\|_{L^2}
\le\frac{\sqrt2}{m_\sigma}
\left(\frac{\sqrt{k}}{\sigma\sqrt{\e}}\right)^k
\le\frac{\sqrt2}{c_G}
\left(\frac{\sqrt{k}}{\sigma\sqrt{\e}}\right)^k,
\]
which proves \eqref{newkernel-derivatives}.
\end{proof}

\subsection{Finite-time interval recovery}
We now combine the abstract recovery bound with the kernel estimates
to choose the evolution time and smoothing parameter.

\begin{theorem}[Finite-time interval recovery]\label{thm:finite-time-recovery}
For the kernel \eqref{varphik}, let $\bb v$ be the solution of
\eqref{vtpsource}. At $T=5\alpha_{A^{-1}}$,
with $c_G=\operatorname{erf}(1/\sqrt2)$, we have
\begin{equation}\label{interval-continuous-error}
\sup_{|p|\le1}\|\bb v(T,p)-\bb x\|\le\frac{\e^{-r}}{c_G}\|\bb x\|.
\end{equation}
In particular, for any $0<\delta<1$, choosing an integer
\[
r\ge\max\{2,\log(1/(c_G\delta))\}
\]
gives
\[
\sup_{|p|\le1}\|\bb v(T,p)-\bb x\|\le\delta\|\bb x\|.
\]
\end{theorem}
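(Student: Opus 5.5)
We plan to apply the abstract interval bound \eqref{abstract-interval-error} from Theorem~\ref{thm:discreteFouriersource} with $a=1$ and $T=5\alpha_{A^{-1}}$. Since $T\ge a\alpha_{A^{-1}}$, the hypothesis holds, and we obtain
\[
\sup_{|p|\le1}\|\bb v(T,p)-\bb x\|
\le\Bigl(\sup_{|p|\le1}|1-F(p)|+\sup_{|s|\ge4}|F(s)|\Bigr)\|\bb x\|,
\]
because $T/\alpha_{A^{-1}}-a=5-1=4$. The kernel \eqref{varphik} meets the hypotheses of that theorem: by Theorem~\ref{thm:newkernel} it is odd, lies in $C^\infty\cap L^1$, and has unit positive-half-line integral.

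We bound the two terms using Theorem~\ref{thm:newkernel}. The first term is at most $\e^{-r}/(2c_G)$ by \eqref{newkernel-recovery}. For the second, $|s|\ge4$ gives $|s|\ge3$ and $(|s|-3)^2\ge1$, so \eqref{newkernel-tail} yields
\[
0\le F(s)\le\frac1{2c_G}\e^{-1/(2\sigma^2)}=\frac{\e^{-r}}{2c_G},
\]
using $\sigma^2=1/(2r)$. Adding the two terms gives \eqref{interval-continuous-error}. This choice $T=5\alpha_{A^{-1}}$ is exactly what puts the shifted arguments $p+T\lambda$ at least one unit beyond the support edge $3$ of the unsmoothed profile, so the tail leakage matches the interior leakage $\e^{-r}$.

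For the final claim, take an integer $r\ge\max\{2,\log(1/(c_G\delta))\}$. Then $\e^{-r}\le c_G\delta$, and so $\e^{-r}/c_G\le\delta$. The condition $r\ge2$ keeps us within the range where Theorem~\ref{thm:newkernel} applies. No step is a real obstacle; the only point to check is that $4-3=1$ gives exactly the exponent $1/(2\sigma^2)=r$.
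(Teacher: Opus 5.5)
Your proposal is correct and matches the paper's proof: you apply \eqref{abstract-interval-error} with $a=1$ so that $T/\alpha_{A^{-1}}-a=4$, bound the interior term by \eqref{newkernel-recovery} and the tail term by \eqref{newkernel-tail} using $(|s|-3)^2\ge1$ and $1/(2\sigma^2)=r$, and then obtain the $\delta$-statement from $\e^{-r}\le c_G\delta$. Your explicit check that the kernel satisfies the hypotheses of Theorem~\ref{thm:discreteFouriersource} is a small detail that the paper leaves implicit.
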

\begin{proof}
Theorem~\ref{thm:newkernel} gives
\[
0\le1-F(p)\le\frac{\e^{-r}}{2c_G}\quad(|p|\le1),\qquad
0\le F(s)\le\frac{\e^{-r}}{2c_G}\quad(|s|\ge4).
\]
Since $T/\alpha_{A^{-1}}-1=4$, substituting these estimates into
\eqref{abstract-interval-error} with $a=1$ proves
\eqref{interval-continuous-error}. The prescribed-accuracy statement follows
from $\e^{-r}/c_G\le\delta$.
\end{proof}

\section{Fourier discretization and interval recovery}\label{sec:discretization}
We discretize \eqref{vtpsource} by Fourier projection after
periodizing its source. We bound the contribution of the periodic
copies on the recovery interval and then control the omitted Fourier
modes. Sampling the projected source gives the finite inhomogeneous
ODE used by the quantum algorithm.

\subsection{Periodization of the convection equations}
The Gaussian-smoothed kernel has rapidly decaying tails. We therefore
periodize it by summing its translates:
\begin{equation}\label{periodic-kernel}
\zeta_R(p)=\sum_{\ell\in\mathbb Z}\zeta(p+2\ell R).
\end{equation}
The series and each differentiated series converge uniformly on
$[-R,R]$, since each derivative of $\zeta$ is a convolution of a
Gaussian derivative with the compactly supported function $f$.
Thus $\zeta_R$ is smooth and $2R$-periodic. We choose
\begin{equation}\label{period-radius}
R=\alpha_AT+4.
\end{equation}
Unlike a compactly supported cut-off, this construction introduces a
small periodization error, which we include explicitly below.

\begin{lemma}\label{lem:truncSchr}
Let $r\ge2$, $\sigma=(2r)^{-1/2}$, and let $R$ be given by
\eqref{period-radius}. Consider
\begin{equation}\label{perExtension}
\begin{cases}
\partial_t\bb v^{\mathrm{per}}(t,p)
=A\partial_p\bb v^{\mathrm{per}}(t,p)+\zeta_R(p)\bb b,
&0<t<T,\quad -R<p<R,\\
\bb v^{\mathrm{per}}(0,p)=\bb0,\\
\bb v^{\mathrm{per}}(t,-R)=\bb v^{\mathrm{per}}(t,R).
\end{cases}
\end{equation}
There is an absolute constant $C$ such that
\begin{equation}\label{periodization-error}
\sup_{\substack{0\le t\le T\\|p|\le1}}
\|\bb v^{\mathrm{per}}(t,p)-\bb v(t,p)\|
\le CT\sigma^{-1}\e^{-4r}\|\bb b\|.
\end{equation}
Moreover,
\begin{equation}\label{periodic-kernel-norm}
\|\zeta_R\|_{L^2(-R,R)}\le\sqrt2/c_G.
\end{equation}
\end{lemma}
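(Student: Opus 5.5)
We would solve \eqref{perExtension} exactly by characteristics, as in the proof of Theorem~\ref{thm:discreteFouriersource}. In the eigenbasis of $A$, the periodic problem has the solution
\[
v^{\mathrm{per}}_j(t,p)=b_j\int_0^t\zeta_R(p+\lambda_js)\,\d s ,
\]
where the argument is read modulo $2R$. This is legitimate because $\zeta_R$ is smooth and $2R$-periodic, so the periodic boundary condition holds automatically. Uniqueness follows as before, applying characteristics on the circle to the difference of two solutions. Expanding $\zeta_R$ in its translates gives
\[
\bb v^{\mathrm{per}}(t,p)-\bb v(t,p)=\sum_{\ell\ne0}\int_0^t\zeta\bigl((p+2\ell R)I+sA\bigr)\bb b\,\d s .
\]
Exchanging sum and integral is justified by uniform convergence of the translate series, which the paper noted after \eqref{periodic-kernel}.

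Next we bound each term spectrally. For $|p|\le1$, $0\le s\le t\le T$ and $|\lambda|\le\|A\|\le\alpha_A$, we have
\[
|p+2\ell R+s\lambda|\ge 2|\ell|R-1-\alpha_AT=(2|\ell|-1)R+3\ge 2|\ell|R-R+3 ,
\]
and for $|\ell|\ge1$ this is at least $R+3\ge7$. We then need a pointwise tail bound on $\zeta$ for $|p|\ge3$. Bounding the subtracted Gaussian term by zero in \eqref{varphik}, and using $m_\sigma\ge c_G$ from Theorem~\ref{thm:newkernel}, gives
\[
|\zeta(p)|\le \frac{1}{c_G}\sup_{y\in[2,3]}G_\sigma(|p|-y)\le\frac{1}{c_G\sqrt{2\pi}\sigma}\e^{-(|p|-3)^2/(2\sigma^2)} .
\]
For $|\ell|\ge1$ the distance $(|p+2\ell R+s\lambda|-3)$ is at least $(2|\ell|-1)R\ge(2|\ell|-1)\cdot4$. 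The exponent is therefore at most $-16(2|\ell|-1)^2/(2\sigma^2)=-16r(2|\ell|-1)^2$. Summing over $\ell\ne0$ gives a geometric-type sum dominated by its first term, which is well below $\e^{-4r}$. Multiplying by $t\le T$ and $\|\bb b\|$ yields \eqref{periodization-error} with an absolute $C$, and the estimate even has a lot of room to spare. The only delicate step is the lower bound on the argument. The choice $R=\alpha_AT+4$ is exactly what keeps every shifted characteristic a distance at least $R+3$ from the origin, which in turn keeps it at distance at least $4$ from the support region $[2,3]$ of $f$. This is what produces a factor $\e^{-4r}$ or better.

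For \eqref{periodic-kernel-norm}, we would first show that $\zeta(p+2\ell R)$ has a fixed sign on $[0,R]$ for each $\ell$. Combining this with the oddness of $\zeta$ and positivity on $(0,\infty)$ from Theorem~\ref{thm:newkernel} leads to a direct comparison. A cleaner route is Parseval on $(-R,R)$. The Fourier coefficients of $\zeta_R$ are samples of $\varphi$, namely $\pi\varphi(\pi m/R)/R$ with suitable normalization, so
\[
\|\zeta_R\|_{L^2(-R,R)}^2=\frac{\pi}{R}\sum_m 2\pi|\varphi(\pi m/R)|^2 .
\]
Writing $\varphi=m_\sigma^{-1}\e^{-\sigma^2\omega^2/2}h_f$, the same identity applied to the periodization $f_R$ of $f$ gives $\frac{\pi}{R}\sum_m2\pi|h_f(\pi m/R)|^2=\|f_R\|^2_{L^2(-R,R)}=2$. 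This uses $R\ge4>3$, so the translates of $f$ do not overlap. Since $\e^{-\sigma^2\omega^2}\le1$, we obtain $\|\zeta_R\|^2\le2/m_\sigma^2\le2/c_G^2$. This mirrors the continuous Plancherel argument in Theorem~\ref{thm:newkernel} and is the step we would write out carefully. The main obstacle is verifying the Poisson-summation normalization of the Fourier coefficients.
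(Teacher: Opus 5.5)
Your proof is correct. The periodization estimate follows the paper's argument step for step: the characteristic formula in each eigencomponent, the lower bound $|p+s\lambda+2\ell R|-3\ge(2|\ell|-1)R$ forced by $R=\alpha_AT+4$, the Gaussian tail bound on $|\zeta|$ using $m_\sigma\ge c_G$, and a sum over $\ell\ne0$ that is of order $\sigma^{-1}\e^{-16r}$ and is then weakened to $\e^{-4r}$.

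For \eqref{periodic-kernel-norm} you take a different, Fourier-side route. The paper writes $\zeta_R=m_\sigma^{-1}G_{\sigma,R}*_{\rm per}f_R$. It then applies Young's inequality on the circle with $\|G_{\sigma,R}\|_{L^1(-R,R)}=\|G_\sigma\|_{L^1(\mathbb R)}=1$ and $\|f_R\|_{L^2(-R,R)}=\sqrt2$, which needs no Fourier bookkeeping. You instead use Parseval and the coefficientwise domination $|c_k(\zeta_R)|\le m_\sigma^{-1}|c_k(f_R)|$, which comes from $\e^{-\sigma^2\omega^2}\le1$. This is the multiplier form of the same fact, $\|G_\sigma\|_{L^1}=1$ for a positive kernel.

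The normalization you flag as the main obstacle is settled by unfolding the periodization, which Fubini permits since $\zeta\in L^1$: $c_k(\zeta_R)=\frac1{2R}\int_{\mathbb R}\zeta(p)\e^{-\i\nu_kp}\,\d p=-\pi\varphi(\nu_k)/R$. This agrees with the explicit coefficients the paper records right after the lemma. The sign is irrelevant under Parseval. For $f_R$ no unfolding is needed, because $f_R=f$ on $(-R,R)$ once $R\ge3$.

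Your version mirrors the Plancherel proof in Theorem~\ref{thm:newkernel} and gives a mode-by-mode bound that transfers directly to any Fourier truncation such as $\psi_\Pi=\mathcal P_p\zeta_R$. The paper's version is shorter and stays in real space.
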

\begin{proof}
The convolution formula and $m_\sigma\ge c_G$ imply
\[
|\zeta(q)|\le C\sigma^{-1}
\exp\left(-\frac{(|q|-3)^2}{2\sigma^2}\right),\qquad |q|\ge3.
\]
For $|p|\le1$ and $0\le s\le t\le T$, write
$q=p+\lambda_j(t-s)$ in an eigenbasis of $A$. Then
$|q|\le1+\alpha_AT=R-3$, and, for $\ell\ne0$,
\[
|q+2\ell R|-3\ge(2|\ell|-1)R.
\]
Consequently,
\[
\sum_{\ell\ne0}|\zeta(q+2\ell R)|
\le C\sigma^{-1}\sum_{m=1}^\infty
\e^{-r(2m-1)^2R^2}
\le C\sigma^{-1}\e^{-4r},
\]
where $R\ge4$ makes the remaining Gaussian series uniformly bounded.
The characteristic formula gives
\[
v_j^{\mathrm{per}}(t,p)-v_j(t,p)
=b_j\int_0^t\sum_{\ell\ne0}
\zeta(p+\lambda_j(t-s)+2\ell R)\,\d s.
\]
Taking the Euclidean norm proves \eqref{periodization-error},
for either sign of the eigenvalues.

For the norm bound, let $G_{\sigma,R}$ and $f_R$ be the
periodizations of $G_\sigma$ and $f$. With the unnormalized periodic
convolution $(a*_{\rm per}b)(p)=\int_{-R}^Ra(p-y)b(y)\,\d y$,
we have $\zeta_R=m_\sigma^{-1}G_{\sigma,R}*_{\rm per}f_R$.
Here $\|G_{\sigma,R}\|_1=1$ and $\|f_R\|_2=\sqrt2$, since $R\ge4$.
Young's inequality proves \eqref{periodic-kernel-norm}.
\end{proof}

The Fourier coefficients are explicit. For $\nu_k=\pi k/R$,
$k\in\mathbb Z$, unfolding the periodization gives
\begin{equation}\label{periodic-fourier-coefficients}
c_k=\frac1{2R}\int_{-R}^R\zeta_R(p)\e^{-\i\nu_kp}\,\d p
=-\frac{\i}{Rm_\sigma}\e^{-\sigma^2\nu_k^2/2}
\int_2^3\sin(\nu_ky)\,\d y,\qquad c_0=0.
\end{equation}
For nonzero $\nu_k$, the integral equals
$(\cos(2\nu_k)-\cos(3\nu_k))/\nu_k$.

\begin{remark}[Earlier cut-off parameters]\label{rem:legacy-cutoff}
For comparison, the cut-off construction used in the numerical
experiments below takes
\begin{equation}\label{Rnew}
R_0=\alpha_AT+1,\qquad d=r,\qquad R_1=R_0+2d,\qquad
R=R_1+\alpha_AT=2\alpha_AT+1+2r.
\end{equation}
Its profile is $\psi=\rho\zeta$, where
$\rho=\eta_d*\mathbf1_{[-R_0-d,R_0+d]}$,
$\eta_d(p)=d^{-1}\eta(p/d)$, and $\eta$ is the normalized bump
proportional to $\exp(-1/(1-p^2))$ on $|p|<1$ and zero elsewhere.
Those experiments retain this earlier cut-off discretization and use $\bb\psi=(\psi(p_j))_j$, sampled from the cut-off profile.
The theoretical construction and bounds in the following sections
instead use \eqref{period-radius} and the projected periodic source.
\end{remark}

\subsection{Discretization of the auxiliary variable}	

We apply Fourier projection to \eqref{perExtension} on $[-R,R]$,
using $N_p=2^{n_p}$ equally spaced points, with $n_p\ge2$:
\[
\mathcal I_p=\{0,\ldots,N_p-1\},\qquad
\Delta p=2R/N_p,\qquad p_k=-R+k\Delta p,\quad k\in\mathcal I_p.
\]
Set $p_{N_p}=R$ for the last cell endpoint, identified with $p_0=-R$ by periodicity. Define the accepted grid indices and their number by
\[
\mathcal R_p=\{k\in\mathcal I_p:|p_k|\le1\},\qquad M_p=|\mathcal R_p|.
\]
We label the retained Fourier modes by $l\in\mathcal I_p$ and use
the centred frequencies and shifted Fourier modes
\[
\phi_l(p)=\frac1{\sqrt{N_p}}\e^{\i\mu_l(p+R)},\qquad
\mu_l=\frac{\pi(l-N_p/2)}R,\qquad l\in\mathcal I_p.
\]
For the full Fourier series and its omitted modes, we extend these
same formulas to all $l\in\mathbb Z$.
This scaling makes the retained modes orthonormal on the grid:
\[
\sum_{j\in\mathcal I_p}\overline{\phi_l(p_j)}\phi_m(p_j)=\delta_{lm},
\qquad
\int_{-R}^R\overline{\phi_l(p)}\phi_m(p)\,\d p
=\Delta p\,\delta_{lm}.
\]
The largest retained frequency magnitude is
\[\mu_{\max} = \max_{l\in\mathcal I_p}|\mu_l| = \frac{\pi N_p}{2R} = \frac{\pi}{\Delta p}.\]
The corresponding Fourier space is
\[
S_p=\operatorname{span}\{\phi_l:l\in\mathcal I_p\}.
\]
Let $\mathcal P_p$ be the $L^2(-R,R)$-orthogonal projection onto
$S_p$. For a periodic function $u$,
\begin{equation}\label{uInterpolation}
u_{\Pi}(p):=\mathcal P_pu(p)=\sum_{l\in\mathcal I_p}\widehat u_l\phi_l(p),
\qquad
\widehat u_l=\frac{\sqrt{N_p}}{2R}\int_{-R}^R
u(p)\e^{-\i\mu_l(p+R)}\,\d p.
\end{equation}
Define the projected source profile
\[
\psi_{\Pi}=\mathcal P_p\zeta_R.
\]
Unfolding the periodization gives the coefficients, for all
$l\in\mathbb Z$,
\begin{equation}\label{periodic-fourier-coefficients}
\begin{aligned}
a_l&=\frac{\sqrt{N_p}}{2R}\int_{-R}^R\zeta_R(p)
\e^{-\i\mu_l(p+R)}\,\d p\\
&=-\frac{\i\sqrt{N_p}\,\e^{-\i\mu_lR}}{R m_\sigma}
\e^{-\sigma^2\mu_l^2/2}\int_2^3\sin(\mu_ly)\,\d y.
\end{aligned}
\end{equation}
The phase $\e^{-\i\mu_lR}$ comes from the shifted basis $\phi_l$.
The zero-frequency mode has index $l=N_p/2$, so $a_{N_p/2}=0$.
For $\mu_l\ne0$, the integral equals
$(\cos(2\mu_l)-\cos(3\mu_l))/\mu_l$.
We define the unitary Fourier matrix by
\[
\Phi_{jl}=\phi_l(p_j)
=\frac{(-1)^j}{\sqrt{N_p}}\e^{2\pi\i jl/N_p},
\qquad j,l\in\mathcal I_p.
\]
Discrete orthogonality gives $\Phi^\dagger\Phi=I$.
The matrix $\Phi$ differs from the standard QFT matrix by the
left factor $\diag\bigl((-1)^j\bigr)_{j\in\mathcal I_p}$, which accounts
for the centred frequencies.
For $u_{\Pi}\in S_p$, let $\bb u=(u_{\Pi}(p_j))_j$ and let
$\widehat{\bb u}$ contain its coefficients in the basis
$\{\phi_l\}_{l\in\mathcal I_p}$. Then
\[
\bb u=\Phi\widehat{\bb u},\qquad
\widehat{\bb u}=\Phi^\dagger\bb u.
\]
Differentiation at the nodes is represented by
\[
P_\mu=\Phi D_\mu\Phi^\dagger,\qquad
D_\mu=\diag(\mu_0,\ldots,\mu_{N_p-1}),\qquad
(u_{\Pi}'(p_j))_j=\i P_\mu\bb u.
\]
We define the projected solution by applying $\mathcal P_p$ componentwise:
\[
\bb v_{\Pi}(t,p):=\mathcal P_p\bb v^{\mathrm{per}}(t,p)
\in S_p\otimes\mathbb C^N.
\]
Since $A$ is independent of $p$, the projection commutes with
$A\partial_p$. Projecting \eqref{perExtension} therefore gives
\[
\partial_t\bb v_{\Pi}(t,p)
=A\partial_p\bb v_{\Pi}(t,p)+\psi_{\Pi}(p)\bb b,
\qquad \bb v_{\Pi}(0,p)=\bb0.
\]
We collect the nodal values of the projected solution and source in
\[
\bb U_{\Pi}(t)=[\bb v_{\Pi}(t,p_0);\ldots;\bb v_{\Pi}(t,p_{N_p-1})],
\qquad
\bb\psi_{\Pi}=[\psi_{\Pi}(p_0),\ldots,\psi_{\Pi}(p_{N_p-1})]^\top.
\]
Here the semicolons denote vertical concatenation. Evaluating the
projected equation at the grid points and using the nodal
differentiation matrix $\i P_\mu$, we obtain
\begin{equation}\label{ODEschr}
\dfrac{\d\bb U_{\Pi}(t)}{\d t}=\bar A\bb U_{\Pi}(t)+\bar{\bb b},\qquad
\bb U_{\Pi}(0)=\bb0,\qquad
\bar A=\i(P_\mu\otimes A),\quad
\bar{\bb b}=\bb\psi_{\Pi}\otimes\bb b.
\end{equation}
This is the nodal representation of the Fourier-projected equation,
to which we apply Schr\"odingerization
in Section~\ref{sec:algorithm}.

In discrete Fourier coordinates, with
$\widetilde{\bb U}_{\Pi}=(\Phi^\dagger\otimes I)\bb U_{\Pi}$ and
$\widetilde{\bb\psi}_{\Pi}=\Phi^\dagger\bb\psi_{\Pi}$, the same system reads
\[
\dfrac{\d\widetilde{\bb U}_{\Pi}(t)}{\d t}
=\i(D_\mu\otimes A)\widetilde{\bb U}_{\Pi}(t)
+\widetilde{\bb\psi}_{\Pi}\otimes\bb b,
\qquad \widetilde{\bb U}_{\Pi}(0)=\bb0.
\]
We solve this equivalent system in Fourier coordinates. Its source
vector is $\widetilde{\bb\psi}_{\Pi}=(a_l)_{l\in\mathcal I_p}$, with explicit entries
\begin{equation}\label{fourier-source-vector}
(\widetilde{\bb\psi}_{\Pi})_l=
\begin{cases}
-\dfrac{\i\sqrt{N_p}}{Rm_\sigma}\,
\e^{-\i\mu_lR-\sigma^2\mu_l^2/2}
\dfrac{\cos(2\mu_l)-\cos(3\mu_l)}{\mu_l},
& l\ne N_p/2,\\[6pt]
0,&l=N_p/2.
\end{cases}
\end{equation}
With $m_\sigma$ fixed, these entries can be evaluated directly.
We prepare $\widetilde{\bb\psi}_{\Pi}/\|\widetilde{\bb\psi}_{\Pi}\|$ in the
Fourier register and apply $\Phi$ to recover nodal coordinates after
evolution. This transform is implemented by a QFT followed by the
diagonal phase correction described above. Since $\Phi$ is unitary,
$\|\widetilde{\bb\psi}_{\Pi}\|=\|\bb\psi_{\Pi}\|$.

\subsection{Recovery and discretization error}
For any chosen grid index $k\in\mathcal R_p$, we define the recovered approximation by
\begin{equation}\label{xhT}
\bb{x}_{\Pi,T}=\Pi_k\bb U_{\Pi}(T)=\bb v_{\Pi}(T,p_k),
\qquad \Pi_k=\bra k\otimes I.
\end{equation}
Each coordinate-extraction map $\Pi_k$ has norm one.
We suppress the dependence on $k$ in $\bb{x}_{\Pi,T}$; all recovery
error bounds below hold uniformly over $k\in\mathcal R_p$.
We combine these nodal approximations through the coherent interval
projection defined below.
The projection error is controlled directly by the Gaussian Fourier tail.

\begin{lemma}\label{lem:L2errspectral}
Let $N_p\ge4$, and set $K=\mu_{N_p-1}=\pi(N_p/2-1)/R$.
If $K\ge1$, then
\begin{equation}\label{gaussian-fourier-tail}
\frac1{\sqrt{N_p}}
\sum_{l\in\mathbb Z\setminus\mathcal I_p}|a_l|
\le C\frac{\e^{-\sigma^2K^2/2}}{\sigma^2K^2}.
\end{equation}
In particular, the same bound holds for
$\|\zeta_R-\mathcal P_p\zeta_R\|_{L^\infty(-R,R)}$.
\end{lemma}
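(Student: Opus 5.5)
We bound the omitted coefficients directly from the explicit formula for $a_l$ and then compare the resulting sum with a Gaussian tail integral. From the unfolded coefficients,
\[
\frac{|a_l|}{\sqrt{N_p}}
=\frac{\e^{-\sigma^2\mu_l^2/2}}{Rm_\sigma}
\left|\int_2^3\sin(\mu_ly)\,\d y\right|
\le\frac{\e^{-\sigma^2\mu_l^2/2}}{Rc_G}\min\Bigl\{1,\frac{2}{|\mu_l|}\Bigr\}.
\]
Here we use $m_\sigma\ge c_G$ from Theorem~\ref{thm:newkernel} and $|\cos(2\mu)-\cos(3\mu)|\le2$.

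The omitted indices $l\notin\mathcal I_p$ are exactly those with $l\le -1$ or $l\ge N_p$. They satisfy $\mu_l\le -\pi(N_p/2+1)/R$ or $\mu_l\ge\pi N_p/(2R)$, so in every case $|\mu_l|\ge K+\pi/R>K$. The omitted frequencies form two one-sided arithmetic progressions with spacing $h=\pi/R$, each starting beyond $K$ in absolute value. Hence
\[
\frac1{\sqrt{N_p}}\sum_{l\notin\mathcal I_p}|a_l|
\le\frac{4}{c_G}\cdot\frac1{R}\sum_{j\ge1}\frac{\e^{-\sigma^2(K+jh)^2/2}}{K+jh}.
\]
The summand $g(\mu)=\e^{-\sigma^2\mu^2/2}/\mu$ is decreasing on $\mu>0$. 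Therefore $h\sum_{j\ge1}g(K+jh)\le\int_K^\infty g(\mu)\,\d\mu$, and the factor $1/R=h/\pi$ is exactly what converts the sum into this integral.

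It remains to bound the integral. Since $1/\mu\le\mu/K^2$ for $\mu\ge K$,
\[
\int_K^\infty\frac{\e^{-\sigma^2\mu^2/2}}{\mu}\,\d\mu
\le\frac1{K^2}\int_K^\infty\mu\,\e^{-\sigma^2\mu^2/2}\,\d\mu
=\frac{\e^{-\sigma^2K^2/2}}{\sigma^2K^2}.
\]
Combining the steps gives \eqref{gaussian-fourier-tail} with $C=4/(\pi c_G)$. The hypothesis $K\ge1$ is only needed to keep us away from the degenerate small-$K$ regime; the argument uses $K>0$, which also makes $g$ monotone on the relevant range.

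For the $L^\infty$ statement, $\zeta_R$ is smooth and periodic and its Fourier coefficients are absolutely summable by the same Gaussian bound. Thus its full series $\sum_{l\in\mathbb Z}a_l\phi_l$ converges uniformly to $\zeta_R$, and
\[
\zeta_R-\mathcal P_p\zeta_R=\sum_{l\notin\mathcal I_p}a_l\phi_l .
\]
Since $|\phi_l(p)|=N_p^{-1/2}$, this gives
\[
\|\zeta_R-\mathcal P_p\zeta_R\|_\infty\le N_p^{-1/2}\sum_{l\notin\mathcal I_p}|a_l|.
\]

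The only real obstacle is the sum-to-integral comparison: the omitted set must begin strictly beyond $K$ on both sides, and the spacing must be handled exactly so that no factor of $R$ or $N_p$ remains. Both points are settled by the index check above.
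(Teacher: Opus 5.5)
Your proof is correct and follows essentially the same route as the paper. You bound $|a_l|\le C\sqrt{N_p}\,(R|\mu_l|)^{-1}\e^{-\sigma^2\mu_l^2/2}$, check that the omitted frequencies begin at magnitude $K+\pi/R$, compare the decreasing sum with $\int_K^\infty u^{-1}\e^{-\sigma^2u^2/2}\,\d u$, and get the $L^\infty$ bound from $|\phi_l|=N_p^{-1/2}$. Beyond the paper, you make the constant explicit, $C=4/(\pi c_G)$, using $1/u\le u/K^2$, and you correctly note that only $K>0$ is actually needed.
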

\begin{proof}
Equation~\eqref{periodic-fourier-coefficients} gives
$|a_l|\le C\sqrt{N_p}(R|\mu_l|)^{-1}\e^{-\sigma^2\mu_l^2/2}$
for $l\ne N_p/2$. The omitted indices are $l<0$ and $l\ge N_p$.
With $h=\pi/R$, their frequency magnitudes are
$(N_p/2+1+m)h$ and $(N_p/2+m)h$, respectively, for $m\ge0$.
Monotonicity of $u^{-1}\e^{-\sigma^2u^2/2}$ therefore gives
\[
\begin{aligned}
\frac1{\sqrt{N_p}}
\sum_{l\in\mathbb Z\setminus\mathcal I_p}|a_l|
&\le\frac C R\sum_{m=0}^\infty
\frac{\e^{-\sigma^2[(N_p/2+m)h]^2/2}}{(N_p/2+m)h}\\
&\le C\int_K^\infty \frac{\e^{-\sigma^2u^2/2}}u\,\d u
\le C\frac{\e^{-\sigma^2K^2/2}}{\sigma^2K^2}.
\end{aligned}
\]
The maximum-norm bound follows by summing the omitted modes and using
$|\phi_l(p)|=N_p^{-1/2}$.
\end{proof}

We now combine continuous recovery, periodization, and Fourier projection.
\begin{theorem}\label{thm:pdiscretization}
Let the periodic source and domain be as in
Lemma~\ref{lem:truncSchr}. For every $0<\varepsilon\le1/2$, there are
choices of the smoothing and grid parameters for which
\begin{equation}\label{error2}
\|\bb x-\bb{x}_{\Pi,T}\|
\le \varepsilon\|\bb x\|.
\end{equation}
These choices can be made with
\[
r=\Theta(\log(2\kappa_A/(\xi\alpha_A\varepsilon))),\qquad
\mu_{\max}=\Theta(r),\qquad N_p=\Theta(\kappa_A r).
\]
Enlarging the absolute constants in
the parameter choices permits any fixed fraction of $\varepsilon$ in
\eqref{error2}.
\end{theorem}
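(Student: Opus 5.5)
We split the error at the node $p_k$, $k\in\mathcal R_p$, into three parts:
\[
\bb x-\bb x_{\Pi,T}
=\bigl[\bb x-\bb v(T,p_k)\bigr]
+\bigl[\bb v(T,p_k)-\bb v^{\mathrm{per}}(T,p_k)\bigr]
+\bigl[\bb v^{\mathrm{per}}(T,p_k)-\bb v_{\Pi}(T,p_k)\bigr].
\]
We bound each by $\varepsilon\|\bb x\|/3$. Since $|p_k|\le1$ and $T=5\alpha_{A^{-1}}$, Theorem~\ref{thm:finite-time-recovery} bounds the first term by $c_G^{-1}\e^{-r}\|\bb x\|$. This is at most $\varepsilon\|\bb x\|/3$ once $r\ge\log(3/(c_G\varepsilon))$.

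For the periodization term, Lemma~\ref{lem:truncSchr} gives the bound $CT\sigma^{-1}\e^{-4r}\|\bb b\|$. We convert it to a relative bound using $\|\bb b\|/\|\bb x\|=1/\xi$ (with $\bb b$ normalized, or by homogeneity) and $T=5\alpha_{A^{-1}}$. This gives
\[
\frac{CT\|\bb b\|}{\sigma\|\bb x\|}\e^{-4r}
=C\sqrt{r}\,\frac{\alpha_{A^{-1}}}{\xi}\,\e^{-4r}
=C\sqrt r\,\frac{\kappa_A}{\xi\alpha_A}\,\e^{-4r}.
\]
Absorbing $\sqrt r\,\e^{-r}\le C$, it suffices to take $r\ge\tfrac13\log\bigl(C\kappa_A/(\xi\alpha_A\varepsilon)\bigr)$. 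Because $\xi\alpha_A\le\kappa_A$, the quantity $\kappa_A/(\xi\alpha_A)$ is at least $1$. So $r=\Theta(\log(2\kappa_A/(\xi\alpha_A\varepsilon)))$ covers both this requirement and the first one.

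For the projection term, we use the Duhamel formula for the projected system. Since $\mathcal P_p$ commutes with $A\partial_p$, the difference $\bb w=\bb v^{\mathrm{per}}-\bb v_\Pi$ solves
\[
\partial_t\bb w=A\partial_p\bb w+(\zeta_R-\psi_\Pi)\bb b,\qquad \bb w(0)=\bb0.
\]
In each eigenspace the solution is a time integral of translates of $\zeta_R-\psi_\Pi$ along characteristics. Periodic translates preserve the sup norm, so
\[
\sup_p\|\bb w(T,p)\|\le T\,\|\zeta_R-\mathcal P_p\zeta_R\|_{L^\infty}\,\|\bb b\|.
\]
Lemma~\ref{lem:L2errspectral} then bounds this by
\[
CT\,\frac{\e^{-\sigma^2K^2/2}}{\sigma^2K^2}\,\|\bb b\|
\;\le\; C\,\frac{\kappa_A}{\xi\alpha_A}\,\e^{-\sigma^2K^2/2}\,\|\bb x\|.
\]
With $\sigma^2=1/(2r)$, choosing $K=c_0 r$ with $c_0$ a sufficiently large absolute constant gives $\sigma^2K^2/2=c_0^2r/4\ge r$. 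The exponential factor is then at most $\e^{-r}$, which is already small enough by the choice of $r$. This gives $\mu_{\max}=\Theta(K)=\Theta(r)$, and the condition $K\ge1$ holds since $r\ge2$.

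The grid size then follows from
\[
N_p=\frac{2R\mu_{\max}}{\pi},\qquad R=\alpha_AT+4=5\kappa_A+4=\Theta(\kappa_A),
\]
which gives $N_p=\Theta(\kappa_A r)$. Rounding up to a power of two changes $N_p$ only by a constant factor, and enlarging $N_p$ only decreases the tail bound. Enlarging the constants in $r$ and $c_0$ shrinks each of the three contributions arbitrarily, which gives any fixed fraction of $\varepsilon$.

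I expect the main obstacle to be the projection step. Two points need care there. First, the $L^\infty$ projection error must actually propagate along characteristics in the periodic setting with only a factor $T$, including for negative eigenvalues. Second, the relative error must carry the factor $\kappa_A/(\xi\alpha_A)$ rather than something worse; this factor is what produces the stated form of $r$.
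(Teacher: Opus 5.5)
Your proof is correct and follows the paper's argument. You use the same three-term split at an accepted node, Theorem~\ref{thm:finite-time-recovery} and Lemma~\ref{lem:truncSchr} for the first two terms, and Duhamel plus the Gaussian Fourier tail of Lemma~\ref{lem:L2errspectral} for the projection term. The paper writes that term as $\int_0^T(\bb w-\bb w_\Pi)\,\d t$ for the homogeneous periodic response and uses unitarity of each $\e^{\i\mu_lAt}$, which is the same estimate as your characteristic/sup-norm bound.

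One bookkeeping fix is needed. With only $\sigma^2K^2/2\ge r$, the projection term is $C\frac{\kappa_A}{\xi\alpha_A}\e^{-r}$, and the two thresholds on $r$ you wrote do not make this at most $\varepsilon/3$. Either take $c_0\ge4$, so the exponent is at least $4r$ as in the paper's choice $C_1\ge8$, or replace the factor $\tfrac13$ in your second threshold by $1$. Both options keep $r=\Theta(\log(2\kappa_A/(\xi\alpha_A\varepsilon)))$.
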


\begin{proof}
Introduce the periodic homogeneous response
\[
\partial_t\bb w=A\partial_p\bb w,\qquad
\bb w(0,p)=\zeta_R(p)\bb b.
\]
Its projection evolves independently in each retained mode:
\[
\bb w_{\Pi}(t,p)=\mathcal P_p\bb w(t,p)
=\sum_{l\in\mathcal I_p}a_l\phi_l(p)\e^{\i\mu_lAt}\bb b.
\]
Since the projection commutes with $A\partial_p$, Duhamel's formula
gives
\[
\bb v^{\mathrm{per}}(T,p)=\int_0^T\bb w(t,p)\,\d t,\qquad
\bb v_{\Pi}(T,p)=\int_0^T\bb w_{\Pi}(t,p)\,\d t.
\]
Each $\e^{\i\mu_lAt}$ is unitary. Thus
Lemma~\ref{lem:L2errspectral} gives, uniformly in $t$ and $p$,
\begin{equation}\label{errorpk}
\|\bb w(t,p)-\bb w_{\Pi}(t,p)\|
\le C\frac{\e^{-\sigma^2K^2/2}}{\sigma^2K^2}\|\bb b\|.
\end{equation}
There is no aliasing term, because the source coefficients are
projected before sampling.

For every accepted point $p_k$, the error decomposes as
\[
\begin{aligned}
\bb x-\Pi_k\bb U_{\Pi}(T)
={}&\bb x-\bb v(T,p_k)
+\bb v(T,p_k)-\bb v^{\mathrm{per}}(T,p_k)\\
&+\int_0^T[\bb w(t,p_k)-\bb w_{\Pi}(t,p_k)]\,\d t.
\end{aligned}
\]
Theorem~\ref{thm:finite-time-recovery},
\eqref{periodization-error}, and \eqref{errorpk} therefore imply
\[
\max_{k\in\mathcal R_p}
\frac{\|\bb x-\Pi_k\bb U_{\Pi}(T)\|}{\|\bb x\|}
\le \frac{\e^{-r}}{c_G}
+C\frac T\xi\left(
\sqrt r\,\e^{-4r}
+\frac{\e^{-\sigma^2K^2/2}}{\sigma^2K^2}\right).
\]
Choose $N_p$ as the smallest power of two with
$N_p\ge\max\{4,2RC_1r/\pi\}$, where $C_1\ge8$ is an absolute
constant. Then $\mu_{\max}=\Theta(r)$ and
$K\ge\mu_{\max}/2\ge C_1r/2$, so
$\sigma^2K^2/2\ge C_1^2r/16\ge4r$.
Since $T=5\alpha_{A^{-1}}$ and $\sqrt r\le\e^r$, the relative
error is bounded by
\[
\frac{\e^{-r}}{c_G}
+C\frac{\kappa_A}{\xi\alpha_A}\e^{-3r}.
\]
Taking the least integer
\[
r\ge\max\left\{r_0,\,
C_0\log\frac{2\kappa_A}{\xi\alpha_A\varepsilon}\right\}
\]
with sufficiently large absolute constants $r_0\ge2$ and $C_0$
makes this bound at most $\varepsilon$.
Here $\kappa_A/(\xi\alpha_A)\ge1$.
Finally, \eqref{period-radius} gives $R=5\kappa_A+4=\Theta(\kappa_A)$,
and hence $N_p=\Theta(\kappa_A r)$.
Increasing the constants permits any fixed fraction of $\varepsilon$.
\end{proof}
\begin{remark}\label{rem:muMax}
The Gaussian Fourier tail determines the required frequency scale:
$\mu_{\max}=\Theta(r)$ suffices because $\sigma^2=1/(2r)$.
The parameter $r$ can be selected from a certified lower bound on $\xi$. If a certified constant-factor estimate of $\xi$ is available, it permits
$r=\Theta(\log(2\kappa_A/(\xi\alpha_A\varepsilon)))$.
Without such an estimate, the known bound $\xi\ge1/\alpha_A$ permits
$r=\Theta(\log(2\kappa_A/\varepsilon))$.
All subsequent query bounds use the actual chosen value of $r$.
\end{remark}

\subsection{Interval recovery probability}
We now quantify the contribution of interval recovery to the success
probability. The uniform error bound in \eqref{error2} gives, for
$0<\varepsilon\le1/2$,
\[
(1-\varepsilon)^2M_p\|\bb x\|^2
\le\sum_{k\in\mathcal R_p}\|\Pi_k\bb U_{\Pi}(T)\|^2
\le(1+\varepsilon)^2M_p\|\bb x\|^2,
\]
where $M_p$ is the number of grid points in the recovery interval $[-1,1]$.
The squared norms add because distinct grid labels are orthogonal.
For pure-state recovery, we project coherently onto
$\ket{\eta_p}=M_p^{-1/2}\sum_{k\in\mathcal R_p}\ket k$.
The same uniform error bound gives
\[
\left\|M_p^{-1/2}\sum_{k\in\mathcal R_p}\Pi_k\bb U_{\Pi}(T)
-\sqrt{M_p}\,\bb x\right\|
\le\varepsilon\sqrt{M_p}\,\|\bb x\|.
\]
Hence this projection has the same probability scaling as accepting
all grid points in the interval.
As established in the proof of Theorem~\ref{thm:solver-queries},
the quantum implementation constructed below has single-attempt success
probability $\mathbb{P}_p$ before amplitude amplification, with
\[
\mathbb{P}_p
=\Theta\!\left(
\frac{\sum_{k\in\mathcal R_p}\|\Pi_k\bb U_{\Pi}(T)\|^2}
{T^2\|\bb\psi_{\Pi}\|^2\|\bb b\|^2}\right)
=\Theta\!\left(\frac{M_p}{\|\bb\psi_{\Pi}\|^2}\frac{\xi^2}{T^2}\right),
\]
This relation includes the
second-variable recovery and the implementation error bounds proved
there. Here we establish the first-variable factor
$M_p/\|\bb\psi_{\Pi}\|^2=\Theta(1)$, which ensures that grid refinement
introduces no additional loss in recovery probability.

We first establish a positive lower bound for the projected source.
The original kernel satisfies
\[
\int_1^4\zeta(p)\,\d p=F(1)-F(4)
\ge1-c_G^{-1}\e^{-r}\ge1/2,\qquad r\ge2.
\]
Since $T=5\alpha_{A^{-1}}$ and $\kappa_A\ge1$, the new domain has
$R=5\kappa_A+4\ge9$. For $p\in[1,4]$ the nonzero periodic copies
satisfy $|p+2\ell R|-3\ge2|\ell|R-7$.
The Gaussian tail estimate used in Lemma~\ref{lem:truncSchr}
therefore implies
\[
\sup_{p\in[1,4]}|\zeta_R(p)-\zeta(p)|
\le C\sqrt r\,\e^{-4r}.
\]
Together with \eqref{gaussian-fourier-tail} and the chosen frequency
scale, this yields
$\sup_{p\in[1,4]}|\psi_{\Pi}(p)-\zeta(p)|
\le C\sqrt r\,\e^{-4r}$.
Enlarging $r_0$ ensures
$|\int_1^4\psi_{\Pi}(p)\,\d p|\ge1/4$. Hence
\[
\|\psi_{\Pi}\|_{L^2(-R,R)}^2
\ge\frac13\left|\int_1^4\psi_{\Pi}(p)\,\d p\right|^2
\ge\frac1{48}.
\]
The upper bound follows from the contractivity of the orthogonal
projection and \eqref{periodic-kernel-norm}.
Finally, discrete orthogonality of the retained modes gives the exact
identity
\[
\Delta p\,\|\bb\psi_{\Pi}\|^2
=\Delta p\sum_{l\in\mathcal I_p}|a_l|^2
=\|\psi_{\Pi}\|_{L^2(-R,R)}^2.
\]
Thus no quadrature estimate is needed, and $\Delta p=\Theta(1/r)$ gives
\begin{equation}\label{norm_psi}
c\le \Delta p\,\|\bb\psi_{\Pi}\|^2\le C,
\qquad \|\bb\psi_{\Pi}\|=\Theta(\sqrt r).
\end{equation}
Because $N_p$ is even and the grid is symmetric,
$p_{N_p/2}=0$. The accepted points are therefore the integer
multiples of $\Delta p$ with modulus at most one, so
\[
M_p=2\left\lfloor\frac1{\Delta p}\right\rfloor+1,
\qquad
2-\Delta p\le M_p\Delta p\le2+\Delta p.
\]
To keep these grid-count bounds uniform, it suffices that
$\Delta p\le1/2$. This follows from $\Delta p=\Theta(1/r)$
after a further constant enlargement of $r_0$.
Together with \eqref{norm_psi}, this implies
\begin{equation}\label{interval-grid-mass}
1\le M_p\Delta p\le3,\qquad
M_p=\Theta(r),\qquad
c\le\frac{M_p}{\|\bb\psi_{\Pi}\|^2}\le C.
\end{equation}
Substituting this ratio into the probability expression above gives
\[
\mathbb{P}_p=\Theta\!\left(\frac{\xi^2}{T^2}\right)
=\Theta\!\left(\frac{\xi^2\alpha_A^2}{\kappa_A^2}\right).
\]
\begin{remark}[Kernel choice and recovery in the two auxiliary variables]
We solve the finite-dimensional ODE obtained here using the
Schr\"odingerization method of \cite{DMPYSchrodingerizationODE}, as
described in Section~\ref{sec:algorithm}. In that method, the auxiliary
initial profile in $q$ approximates $\e^{-q}$ on the region relevant
to recovery. The ODE solution is recovered by a weighted projection
with $\int\chi_q(q)\e^{-q}\,\d q=1$. The profile and recovery weight
are thus chosen to reproduce and remove this known exponential factor.

The source kernel $\zeta$ in our first auxiliary variable $p$ serves
a different purpose: its tail integral $F(p)$ must approximate one
on a fixed interval, so that $\bb v(T,p)$ approximates $\bb x$
uniformly there. We then recover through a uniform superposition of
the accepted nodes. Although $\|\bb\psi_{\Pi}\|^2=\Theta((\Delta p)^{-1})$,
interval recovery gives $M_p/\|\bb\psi_{\Pi}\|^2=\Theta(1)$.
A bounded continuous $L^2$ norm alone is insufficient: the Gaussian
choice $\zeta(p)=p\e^{-p^2/2}$ already has
$\|\zeta\|_{L^2(\mathbb R)}^2=\sqrt\pi/2$, but
$F(p)=\e^{-p^2/2}$ does not approach one uniformly on a fixed interval;
see Remark~\ref{rem:interval-accuracy}.
\end{remark}

\section{Schr\"odingerization and query bounds}\label{sec:algorithm}
We implement the two-variable Schr\"odingerization representation
using Hamiltonian simulation. The kernel and recovery
estimates preserve a single logarithmic precision factor; block
preconditioning then gives linear scaling in $\kappa_A$ without VTAA.

\subsection{Schr\"odingerization and query complexity}\label{subsec:schrodingerization}
We apply the Schr\"odingerization method
\cite{JLY22SchrShort,JLY22SchrLong,DMPYSchrodingerizationODE}
to \eqref{ODEschr} by introducing a second auxiliary variable $q$.
Together with the convection variable $p$, this realizes
Schr\"odingerization of the QLSP in two higher dimensions.
We use the following result
from \cite[Theorem~4.1]{DMPYSchrodingerizationODE}.

\begin{lemma}[Autonomous dissipative evolution]\label{lem:autonomous-dissipative}
Consider
\[
\frac{\d\bb w(t)}{\d t}=(B_1+\i B_2)\bb w(t),\qquad
B_1=B_1^\dagger\preceq 0,\quad B_2=B_2^\dagger,
\]
with $\bb w(0)\ne\bb0$, $\bb w(T)\ne\bb0$, and
$g_w=\|\bb w(0)\|/\|\bb w(T)\|$.
Assume exact block encodings of $B_1,B_2$ with positive normalizations
$\alpha_1,\alpha_2$, and exact preparation of the normalized initial
state and the auxiliary profile states, including the required controls
and inverses. For $0<\varepsilon_{\rm ODE}\le1/4$, there exists a quantum algorithm
that prepares an $\varepsilon_{\rm ODE}$-approximation in Euclidean norm to
$\bb w(T)/\|\bb w(T)\|$, with a success flag of probability $\Theta(1)$,
using
\[
\mathcal O\!\left(
g_w\left[(\alpha_1T+1)\log\frac{g_w}{\varepsilon_{\rm ODE}}+\alpha_2T\right]
\right)
\]
queries to the matrix block encodings and $\mathcal O(g_w)$ queries
to each state-preparation oracle.
\end{lemma}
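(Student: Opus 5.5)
This lemma is quoted from \cite[Theorem~4.1]{DMPYSchrodingerizationODE}, so I would reconstruct the Schr\"odingerization argument in three steps: warped-phase lifting, discretization of $q$, and weighted recovery with amplitude amplification.

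\textbf{Lifting.} First set $\bb w(t,q)=\e^{-q}\bb w(t)$ for $q>0$. Differentiating gives
\[
\partial_t\bb w=-B_1\partial_q\bb w+\i B_2\bb w .
\]
This equation makes sense for all $q\in\mathbb R$ once the initial data are extended by any profile that equals $\e^{-q}\bb w(0)$ on $q>0$. I would use a smooth, rapidly decaying extension, i.e.\ the auxiliary profile state of the statement.

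Because $B_1\preceq0$, every characteristic moves toward larger $q$. On $q\ge q^\ast$ with $q^\ast=\mathcal O(1)$, the solution therefore depends only on initial data with $q>0$. Hence it equals $\e^{-q}\bb w(T)$ exactly there, for every $T$.

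\textbf{Discretization.} Fourier transform in $q$: with frequency $\eta$, the generator $\eta B_1+B_2$ is Hermitian. The lifted evolution is then a Hamiltonian simulation of $\sum_\eta|\eta\rangle\langle\eta|\otimes(\eta B_1+B_2)$ on a truncated periodic $q$-domain. I would implement it by QSP/QSVT with controlled block encodings of $B_1$ and $B_2$.

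The $q$-domain length is $L=\mathcal O(\log(g_w/\varepsilon_{\rm ODE}))$. This length is needed because the recovery region must sit where $\e^{-q}\|\bb w(0)\|$ still dominates the truncation error relative to $\|\bb w(T)\|$. The maximal frequency $\eta_{\max}$ is $\mathcal O(1)$ up to log factors, by smoothness of the profile. Choosing the profile carefully (as in the cited work) makes the $B_1$ norm scale as $\alpha_1\eta_{\max}=\mathcal O(\alpha_1\log(g_w/\varepsilon_{\rm ODE}))$, while the $B_2$ part stays $\alpha_2$.

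Standard Hamiltonian-simulation bounds then give
\[
\mathcal O\bigl((\alpha_1T+1)\log(g_w/\varepsilon_{\rm ODE})+\alpha_2T\bigr)
\]
queries per evolution. I expect this additive (rather than multiplicative) separation of the log factor to be the main obstacle. I would first try simulating the two terms with a product of $\e^{\i T(\eta B_1\otimes\cdot)}$ and $\e^{\i TB_2}$ pieces. Since these do not commute, the fallback is to block-encode the sum with normalization $\alpha_1\eta_{\max}+\alpha_2$ and use optimal linear-in-time QSP, which yields the stated additive form.

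\textbf{Recovery.} Project onto the $q$-region via the weight $\chi_q$ with $\int\chi_q\e^{-q}\,\d q=1$. This yields $\bb w(T)$ with success amplitude $\Theta(\|\bb w(T)\|/\|\bb w(0)\|)=\Theta(1/g_w)$, because the profile and weight have $\mathcal O(1)$ norms. Amplitude amplification with $\mathcal O(g_w)$ rounds raises the success probability to $\Theta(1)$. This multiplies both the matrix-query count and the state-preparation count by $g_w$, giving the claimed bounds.

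The accuracy budget splits into three parts, each kept at $\varepsilon_{\rm ODE}/g_w$ in unnormalized amplitude: the $q$-truncation, the Fourier cutoff, and the QSP error. All three enter only logarithmically.
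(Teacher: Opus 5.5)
The paper does not prove this lemma. It imports it from \cite[Theorem~4.1]{DMPYSchrodingerizationODE}, and the profile $\psi_q$ and weight $\chi_q$ in Section~\ref{sec:numerics} come from that construction. Your outline follows the same route: warped phase $\e^{-q}\bb w$, a Hermitian generator $\eta B_1+B_2$ per Fourier mode, and one block encoding of normalization $\alpha_1\eta_{\max}+\alpha_2$ simulated by QSP at cost $\mathcal O((\alpha_1\eta_{\max}+\alpha_2)T+\log(g_w/\varepsilon_{\rm ODE}))$. It then uses weighted recovery and $\mathcal O(g_w)$ amplification rounds. With $\eta_{\max}=\mathcal O(\log(g_w/\varepsilon_{\rm ODE}))$ this is exactly the stated count. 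However, your choice of the $q$-domain fails as written.

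\textbf{The gap: direction of transport and domain length.} You have the transport direction reversed. From $\partial_t\bb w=-B_1\partial_q\bb w+\i B_2\bb w$ with $B_1\preceq0$, a scalar mode $\lambda\le0$ gives $w(t,q)=w_0(q+|\lambda|t)$. Waves therefore move toward \emph{smaller} $q$, at speeds up to $\alpha_1$. This is why the region $q>0$ only sees data from $q>0$, so your conclusion there is right. It also means the dissipated mass travels up to $\alpha_1T$ into $q<0$; the lifted evolution is unitary, so this mass is about $\frac12(\|\bb w(0)\|^2-\|\bb w(T)\|^2)$. A Fourier discretization is periodic, so once this distance exceeds the domain length the mass wraps into the recovery window.

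Concretely, take $B_1=\diag(0,-\lambda)$, $B_2=0$, $\bb w(0)=(1,1)^\top/\sqrt2$, so $g_w\le\sqrt2$. If $\lambda T$ equals the period, then $\nu_l\lambda T\in2\pi\mathbb Z$ for every mode. The second component returns exactly to its initial profile, and the recovered vector is proportional to $(1,1)^\top$ instead of approximately $(1,0)^\top$. This regime is within the lemma, whose cost is linear in $\alpha_1T$. Hence a length $\mathcal O(\log(g_w/\varepsilon_{\rm ODE}))$ is insufficient.

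The fix is cheap. Make the left extent at least $\alpha_1T+\mathcal O(1)$ and the right extent at least $\log(g_w/\varepsilon_{\rm ODE})$, so the total length is $\Theta(\alpha_1T+\log(g_w/\varepsilon_{\rm ODE}))$. This adds only $q$-qubits, not queries, as long as the profile's transition width, and hence $\eta_{\max}$, is chosen independently of the length. In this paper's use, $\alpha_1=1/T$, so $\alpha_1T=1$ and the issue does not appear.

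\textbf{Secondary point: the bound on $\eta_{\max}$.} Smoothness alone does not give $\eta_{\max}=\mathcal O(\log(g_w/\varepsilon_{\rm ODE}))$. Suppose your extension equals $\e^{-q}$ exactly on $q>0$ and uses a standard $C^\infty$ bump cutoff. Its Fourier tails are of order $\e^{-c\sqrt\eta}$, which forces $\eta_{\max}\sim\log^2$. What works is a Gaussian-smoothed transition such as $\psi_q=\frac{\e^{-q}}2(1+\operatorname{erf}(q/w))$ with $w=\Theta(\log^{-1/2}(g_w/\varepsilon_{\rm ODE}))$. This profile matches $\e^{-q}$ only up to $\varepsilon_{\rm ODE}/g_w$, and only for $q\ge1/2$. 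The recovery weight must then be supported away from the transition, for example on $[1/2,1]$ as in the paper. Your ``exactly $\e^{-q}\bb w(T)$'' claim then becomes an approximation that you must include in the error budget.
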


For our constant inhomogeneous term, the construction in
\cite[Section~6.2]{DMPYSchrodingerizationODE} gives
\begin{equation}\label{utsourceFourierAugment}
\frac{\d\bb U_{\Pi,f}(t)}{\d t}=\bar A_f\bb U_{\Pi,f}(t),\qquad
\bar A_f=\begin{bmatrix}\bar A&I/T\\0&0\end{bmatrix},\qquad
\bb U_{\Pi,f}(t)=\begin{bmatrix}\bb U_{\Pi}(t)\\T\bar{\bb b}\end{bmatrix},
\quad
\bb U_{\Pi,f}(0)=\begin{bmatrix}\bb0\\T\bar{\bb b}\end{bmatrix}.
\end{equation}
Since $\bar A^\dagger=-\bar A$, we have $\bar A_f=H_1+\i H_2$, where
\begin{equation}\label{H1H2def}
H_1=\frac1{2T}\begin{bmatrix}0&I\\I&0\end{bmatrix},\qquad
H_2=\begin{bmatrix}P_\mu\otimes A&-\i I/(2T)\\
\i I/(2T)&0\end{bmatrix}.
\end{equation}
The scalar shift used in that reference makes the generator dissipative:
\[
\widehat A_f=\bar A_f-\frac{I}{2T},\qquad
\frac{\widehat A_f+\widehat A_f^\dagger}{2}
=-\frac1{2T}\begin{bmatrix}I&-I\\-I&I\end{bmatrix}\preceq 0.
\]
Thus $\bb V(t)=\e^{-t/(2T)}\bb U_{\Pi,f}(t)$ satisfies
$\d\bb V/\d t=\widehat A_f\bb V$ and
$\bb V(T)=\e^{-1/2}\bb U_{\Pi,f}(T)$.
We apply Lemma~\ref{lem:autonomous-dissipative} to prepare the normalized
state of $\bb U_{\Pi,f}(T)$, then select its upper block and project the
first-variable register onto $\ket{\eta_p}$. This projection is implemented
by inverse preparation of $\ket{\eta_p}$ followed by selection of the
zero state. We amplify the joint success flag to obtain the
QLSP solution with constant probability.

We count queries to an exact matrix block encoding $U_A$ of $A$ with
normalization $\alpha_A$ and to $O_b\ket0=\ket b$, including controls
and inverses. We assume the auxiliary state-preparation access required
by Lemma~\ref{lem:autonomous-dissipative}, including preparation of the normalized source $\widetilde{\bb\psi}_{\Pi}/\|\widetilde{\bb\psi}_{\Pi}\|$ in Fourier coordinates. The explicit coefficients in \eqref{fourier-source-vector} specify this state but do not by themselves establish its gate complexity; auxiliary preparation costs are not included in the input-oracle query count. These profiles are independent
of the input oracles. The resulting bounds are summarized below.

\begin{theorem}[QLSP preparation and query complexity]\label{thm:solver-queries}
Let $A$ be invertible and suppose a valid constant-factor estimate of
$\xi$ is available. Under the preceding access assumptions,
for $0<\varepsilon\le1/2$ there exists a quantum algorithm that prepares
an $\varepsilon$-approximation to $\ket x$ in $\ell^2$ norm, with a
success flag of probability $\Omega(1)$. The algorithm uses
\begin{equation}\label{log575}
Q_A=\mathcal O\!\left(\frac{\kappa_A^2}{\xi\alpha_A}
\log\frac{\kappa_A}{\xi\alpha_A\varepsilon}\right),
\qquad Q_b=\mathcal O\!\left(\frac{\kappa_A}{\xi\alpha_A}\right)
\end{equation}
queries to $U_A$ and $O_b$, respectively.
\end{theorem}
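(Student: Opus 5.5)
We will derive the theorem by applying Lemma~\ref{lem:autonomous-dissipative} to the shifted augmented system $\d\bb V/\d t=\widehat A_f\bb V$, then accounting for the interval projection onto $\ket{\eta_p}$ and a final round of amplitude amplification.

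\textbf{Step 1: parameters and encodings.} Fix $T=5\alpha_{A^{-1}}$. Choose $r$, $N_p$, $R$ as in Theorem~\ref{thm:pdiscretization} so that every accepted node satisfies $\|\bb x-\Pi_k\bb U_\Pi(T)\|\le(\varepsilon/8)\|\bb x\|$. This uses the constant-factor estimate of $\xi$ to set $r=\Theta(\log(\kappa_A/(\xi\alpha_A\varepsilon)))$.

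The dissipative part $B_1=H_1-I/(2T)$ is a fixed matrix with $\alpha_1=\mathcal O(1/T)$ and needs no $U_A$ queries, so $\alpha_1T=\mathcal O(1)$. The Hermitian part $B_2=H_2$ is block encoded using one query to $U_A$ together with the QFT-based $P_\mu=\Phi D_\mu\Phi^\dagger$, which satisfies $\|P_\mu\|=\mu_{\max}=\Theta(r)$. Hence $\alpha_2=\mathcal O(\mu_{\max}\alpha_A+1/T)$ and
\[
\alpha_2T=\mathcal O(r\kappa_A).
\]
The normalized initial state $\ket{0}\otimes\widetilde{\bb\psi}_\Pi/\|\widetilde{\bb\psi}_\Pi\|\otimes\ket b$ costs one query to $O_b$, and $\Phi$ maps it to nodal coordinates.

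\textbf{Step 2: the growth factor $g_w$.} Since $\bb V(T)=\e^{-1/2}\bb U_{\Pi,f}(T)$, we have
\[
g_w=\frac{\e^{1/2}T\|\bb\psi_\Pi\|\|\bb b\|}{\|\bb U_{\Pi,f}(T)\|},\qquad \|\bb U_{\Pi,f}(T)\|\ge\|\bb U_\Pi(T)\|\ge\Bigl(\sum_{k\in\mathcal R_p}\|\Pi_k\bb U_\Pi(T)\|^2\Bigr)^{1/2}\ge\tfrac12\sqrt{M_p}\,\xi\|\bb b\|.
\]
Combining this with \eqref{interval-grid-mass} gives $g_w=\mathcal O(T/\xi)=\mathcal O(\kappa_A/(\xi\alpha_A))$. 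In particular $\log(g_w/\varepsilon_{\rm ODE})=\mathcal O(r)$.

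\textbf{Step 3: success probability and error.} After preparing the state $\bb U_{\Pi,f}(T)/\|\bb U_{\Pi,f}(T)\|$, we select the upper block and project onto $\ket{\eta_p}$. The success amplitude is $\|M_p^{-1/2}\sum_k\Pi_k\bb U_\Pi(T)\|/\|\bb U_{\Pi,f}(T)\|$. The numerator is $\ge\tfrac12\sqrt{M_p}\,\xi\|\bb b\|$ by the coherent interval bound. The denominator is at most
\[
\|\bb U_\Pi(T)\|+T\|\bb\psi_\Pi\|\|\bb b\|,
\]
and the bound $\|\bb U_\Pi(T)\|\le T\|\bb\psi_\Pi\|\|\bb b\|$ follows from Duhamel's formula with unitary propagation. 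Therefore the success probability is
\[
\mathbb P_p=\Omega\bigl(M_p\xi^2/(T^2\|\bb\psi_\Pi\|^2)\bigr)=\Omega(\xi^2/T^2).
\]
Fixed-point or oblivious amplitude amplification then needs $\mathcal O(T/\xi)=\mathcal O(\kappa_A/(\xi\alpha_A))$ rounds.

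For the error, the ideal projected vector is within relative error $\varepsilon/8$ of $\sqrt{M_p}\,\bb x$, so after normalization it is $\varepsilon/4$-close to $\ket x$. The ODE error $\varepsilon_{\rm ODE}$ is magnified by the inverse success amplitude. We therefore take $\varepsilon_{\rm ODE}=\Theta(\varepsilon\,\xi/T)$, which only changes the logarithm by a constant factor.

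\textbf{Step 4: counting queries.} This is the main obstacle. A naive count multiplies the $\mathcal O(T/\xi)$ amplification rounds by the full Lemma~\ref{lem:autonomous-dissipative} cost $g_w\cdot r\kappa_A$, giving an extra factor of $\kappa_A/(\xi\alpha_A)$. To avoid this, I would not amplify the two flags separately. Instead I would open the lemma's construction: it already incorporates $\mathcal O(g_w)$ amplification rounds of a single Hamiltonian-simulation unitary costing $\mathcal O(\alpha_1T\log+\alpha_2T)$ queries. The $p$-projection onto $\ket{\eta_p}$ is folded into its good-subspace flag. The joint success amplitude of one pass is then $\Omega(\xi/T)$ up to the constant from the $q$-recovery.

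One round of amplitude amplification over $\mathcal O(T/\xi)$ iterations of a unitary costing $\mathcal O(r+r\kappa_A)$ queries to $U_A$ and $\mathcal O(1)$ queries to $O_b$ then gives
\[
Q_A=\mathcal O\Bigl(\frac{\kappa_A}{\xi\alpha_A}\,\kappa_A r\Bigr),\qquad Q_b=\mathcal O\Bigl(\frac{\kappa_A}{\xi\alpha_A}\Bigr),
\]
which matches \eqref{log575}. The delicate point is verifying that the success flag in the cited construction is indeed the single joint projection, and that its error analysis holds with the additional $p$-register projection. I would check this against \cite[Section~6.2]{DMPYSchrodingerizationODE}.
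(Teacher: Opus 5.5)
\textbf{Gap in Step~2.} The gap is in Step~2, and it is what creates the obstacle you then try to work around in Step~4. Your bound $g_w=\mathcal O(T/\xi)$ uses only $\|\bb U_{\Pi,f}(T)\|\ge\|\bb U_\Pi(T)\|$ and discards the lower block of the augmented vector. The bottom block row of $\bar A_f$ vanishes, so that block stays equal to $T\bar{\bb b}$ for all $t$. The initial data $[\bb0;T\bar{\bb b}]$ lie entirely in it (the paper's $\ket1$ on the augmentation qubit). Hence
$\|\bb U_{\Pi,f}(T)\|^2=\|\bb U_\Pi(T)\|^2+T^2\|\bar{\bb b}\|^2\in[T^2\|\bar{\bb b}\|^2,\,2T^2\|\bar{\bb b}\|^2]$; the upper bound is the Duhamel estimate you already use in Step~3. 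Your own formula for $g_w$ then gives $\sqrt{\e/2}\le g_w\le\sqrt\e$.

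\textbf{Consequence.} With $g_w=\Theta(1)$, Lemma~\ref{lem:autonomous-dissipative} applies directly as a black box. Each call costs $\mathcal O(\log(1/\varepsilon_{\rm ODE})+\alpha_2T)=\mathcal O(\kappa_Ar)$ queries to $U_A$ and $\mathcal O(1)$ queries to $O_b$, with a $\Theta(1)$ success flag. Your Step~3 amplitude $\Theta(\xi/T)$ for upper-block selection and projection onto $\ket{\eta_p}$ then needs $\mathcal O(T/\xi)=\mathcal O(\kappa_A/(\xi\alpha_A))$ rounds of ordinary amplitude amplification. This gives exactly \eqref{log575}. Your ``naive count'' is therefore the correct count, and it is the paper's argument; nothing has to be merged.

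\textbf{Step~4 as written.} Step~4 rests on an unverified claim about the internals of \cite{DMPYSchrodingerizationODE}: that the $p$-projection can be folded into the lemma's own success flag with its error analysis intact. You flag this yourself as unchecked. The joint amplitude $\Omega(\xi/T)$ you compute is right. But the lemma is available only as a black box, so the step is unproven, and with the correct $g_w$ it is also unnecessary.

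\textbf{Non-Hermitian $A$.} The theorem is stated for general invertible $A$, while your argument uses unitarity of $\e^{t\bar A}$, which requires Hermitian $A$. You need to add the Hermitian dilation $\widetilde A=\begin{bmatrix}0&A\\A^\dagger&0\end{bmatrix}$ with $\ket{\widetilde b}=\ket0\otimes\ket b$. It has the same norm bounds and the same $\xi$, and its block encoding costs $\mathcal O(1)$ queries. You then finish with a constant-probability projection of the extra qubit onto $\ket1$. This case matters because Theorem~\ref{thm:optimal-queries} applies the result to the non-Hermitian matrix $SA$.
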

\begin{proof}
First suppose $A$ is Hermitian. Since $\bar A$ is skew-Hermitian,
\[
\bb U_{\Pi}(T)=\int_0^T\e^{(T-s)\bar A}\bar{\bb b}\,\d s,
\qquad \|\bb U_{\Pi}(T)\|\le T\|\bar{\bb b}\|.
\]
Consequently,
\[
T^2\|\bar{\bb b}\|^2\le\|\bb U_{\Pi,f}(T)\|^2
\le2T^2\|\bar{\bb b}\|^2,\qquad
\sqrt{\e/2}\le
\frac{\|\bb V(0)\|}{\|\bb V(T)\|}\le\sqrt\e.
\]
Thus the norm ratio in Lemma~\ref{lem:autonomous-dissipative} is bounded
by an absolute constant. For the shifted generator, standard block-encoding
constructions give
\[
B_1=H_1-I/(2T),\quad B_2=H_2,\qquad
\alpha_1=1/T,\quad \alpha_2=\mu_{\max}\alpha_A+1/(2T).
\]
Each matrix query uses $\mathcal O(1)$ calls to $U_A$ and its inverse;
preparing $\bb U_{\Pi,f}(0)/\|\bb U_{\Pi,f}(0)\|=\ket1\otimes
(\bb\psi_{\Pi}/\|\bb\psi_{\Pi}\|)\otimes\ket b$ uses one call to $O_b$.
The lemma therefore prepares the augmented state to accuracy $\varepsilon_{\rm ODE}$
with $\mathcal O(\mu_{\max}\alpha_AT+\log(1/\varepsilon_{\rm ODE}))$ matrix queries
and $\mathcal O(1)$ right-hand-side queries.

Let $g$ be the norm of the accepted component of the exact normalized
augmented state. The interval error and normalization estimates give
\[
g^2=\frac{\left\|M_p^{-1/2}\sum_{k\in\mathcal R_p}\Pi_k\bb U_{\Pi}(T)\right\|^2}
{\|\bb U_{\Pi,f}(T)\|^2}
=\Theta\!\left(\frac{M_p\xi^2}{T^2\|\bb\psi_{\Pi}\|^2}\right)
=\Theta\!\left(\frac{\xi^2\alpha_A^2}{\kappa_A^2}\right).
\]
The auxiliary ODE solver has a success flag of constant probability,
so including this flag preserves the order of $g^2$.
Choose $\varepsilon_{\rm ODE}=c\varepsilon\xi/T$ with a sufficiently small
absolute constant $c>0$, using the supplied estimate of $\xi$.
Projection is a contraction, and normalization increases the vector
error by at most a constant times $1/g$. The ODE-solver error therefore
contributes $\mathcal O(\varepsilon)$ after recovery.
Theorem~\ref{thm:pdiscretization} and the coherent interval-recovery
bound give the same order of error in the normalized recovered vector
relative to $\ket x$. Allocating sufficiently small fixed fractions
of $\varepsilon$ to these two errors gives a pure output state
$\ket{\widetilde x}$ satisfying
\[
\|\ket{\widetilde x}-\ket x\|\le\varepsilon.
\]

We choose $r=\Theta(\log(\kappa_A/(\xi\alpha_A\varepsilon)))$
with a sufficiently large implied constant. Since $\mu_{\max}=\Theta(r)$
and $\log(1/\varepsilon_{\rm ODE})=\mathcal O(r)$,
each attempt uses $\mathcal O(\kappa_A r)$ matrix queries and
$\mathcal O(1)$ right-hand-side queries. A known constant-factor lower
bound on $g$ permits bounded amplitude amplification using
$\mathcal O(1/g)=\mathcal O(\kappa_A/(\xi\alpha_A))$ attempts and their
inverses, proving \eqref{log575}.

For general invertible $A$, we use the Hermitian dilation
\[
\widetilde A=\begin{bmatrix}0&A\\A^\dagger&0\end{bmatrix},\qquad
\ket{\widetilde b}=\ket0\otimes\ket b,\qquad
\widetilde A^{-1}\ket{\widetilde b}=\ket1\otimes A^{-1}\ket b.
\]
Its norm bounds and $\xi$ are unchanged, and its block encoding uses
$\mathcal O(1)$ queries to $U_A$ and $U_A^\dagger$.
We also project the extra qubit onto $\ket1$, its value in the target
state. This projection has constant success probability and changes the
normalized-vector error by at most a constant factor, which we absorb
in the error budget.
\end{proof}
\subsection{Linear scaling via block preconditioning}\label{subsec:block-preconditioning}
We apply the block preconditioner of \cite{Low2026quantumlinearsystem}
to obtain linear dependence on $\kappa_A$ while retaining ordinary
amplitude amplification. Thus we do not require VTAA. Let
$S=I-(1-s)\ket b\bra b$, where $0<s\le1$, and consider
\begin{equation}\label{SASb}
SA\bb x=S\bb b.
\end{equation}
Since $\ket{Sb}=\ket b$, we use the same preparation oracle $O_b$.
The construction in \cite[Eq.~(176)]{Low2026quantumlinearsystem}
block-encodes $S$ with normalization one using two queries to $O_b$
and its inverse. Combining it with $U_A$ gives an $\alpha_A$-normalized
block encoding of $SA$ with constant query overhead.

We recall the properties needed for the complexity bound;
see \cite{Low2026quantumlinearsystem} for the preconditioning analysis.
\begin{lemma}\label{lem:Sproperty}
Let $A$ be invertible and suppose $\xi/c_\xi\le\xi_c\le c_\xi\xi$ for a
fixed factor $c_\xi>1$. Choose
\begin{equation}\label{sChoose}
s=\frac{\xi_c\alpha_A}{c_\xi\kappa_A},\qquad
\frac{\xi\alpha_A}{c_\xi^2\kappa_A}\le s
\le\frac{\xi\alpha_A}{\kappa_A}\le1.
\end{equation}
Then
\begin{align}
\xi_{SA}:=\|(SA)^{-1}\ket{Sb}\|
&=\frac{\xi}{s},\qquad
\alpha_{A^{-1}}\le\xi_{SA}\le c_\xi^2\alpha_{A^{-1}},\label{property1}\\
\|SA\|&\le\|A\|\le\alpha_A,\label{property2}\\
\|(SA)^{-1}\|&\le\sqrt{1+c_\xi^4}\,\alpha_{A^{-1}}.\label{property3}
\end{align}
The normalized solution of \eqref{SASb} is $\ket x$.
\end{lemma}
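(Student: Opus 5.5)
The plan is to reduce everything to the spectral structure of the rank-one modification $S=I-(1-s)\ket b\bra b$. Since $\ket b$ is a unit vector, $S$ is Hermitian. It acts as multiplication by $s$ on $\operatorname{span}\{\ket b\}$ and as the identity on the orthogonal complement. Hence its eigenvalues are $s$ and $1$, so $S$ is invertible for $0<s\le1$, with
\[
S^{-1}=I+\Bigl(\frac1s-1\Bigr)\ket b\bra b,\qquad \|S\|=1 .
\]
I will first check the two-sided bound \eqref{sChoose} on $s$ directly from $\xi/c_\xi\le\xi_c\le c_\xi\xi$. Substituting into $s=\xi_c\alpha_A/(c_\xi\kappa_A)$ gives $\xi\alpha_A/(c_\xi^2\kappa_A)\le s\le\xi\alpha_A/\kappa_A$. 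Moreover $\xi\alpha_A/\kappa_A=\xi/\alpha_{A^{-1}}\le1$, because $\xi=\|A^{-1}\ket b\|\le\alpha_{A^{-1}}$.

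\textbf{Solution and \eqref{property1}.} Since $S\bb b=s\bb b$, the normalized right-hand side is $\ket{Sb}=\ket b$. Because $S$ is invertible, \eqref{SASb} has the same solution $\bb x=A^{-1}\bb b$, so its normalized solution is $\ket x$. Using $S^{-1}\ket b=s^{-1}\ket b$, we get $(SA)^{-1}\ket{Sb}=A^{-1}S^{-1}\ket b=s^{-1}A^{-1}\ket b$, whose norm is $\xi/s$. The lower bound $s\ge\xi\alpha_A/(c_\xi^2\kappa_A)$ then gives $\xi/s\le c_\xi^2\kappa_A/\alpha_A=c_\xi^2\alpha_{A^{-1}}$. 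The upper bound on $s$ gives $\xi/s\ge\alpha_{A^{-1}}$.

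\textbf{Bound \eqref{property2}.} This is immediate from submultiplicativity, $\|SA\|\le\|S\|\|A\|=\|A\|\le\alpha_A$.

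\textbf{Bound \eqref{property3}.} This is the only step needing care, since $S^{-1}$ has norm $1/s$, which can be of order $\kappa_A$, so the crude bound $\|A^{-1}\|\|S^{-1}\|$ is useless. I will instead decompose an arbitrary unit vector as $\bb y=\beta\ket b+\bb y_\perp$ with $\bb y_\perp\perp\ket b$ and $|\beta|^2+\|\bb y_\perp\|^2=1$. Then
\[
(SA)^{-1}\bb y=A^{-1}\bb y_\perp+\frac{\beta}{s}A^{-1}\ket b .
\]
The triangle inequality gives $\|(SA)^{-1}\bb y\|\le\alpha_{A^{-1}}\|\bb y_\perp\|+|\beta|\,\xi/s$. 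Applying \eqref{property1}, the second term is at most $|\beta|c_\xi^2\alpha_{A^{-1}}$. Cauchy–Schwarz on the vectors $(\|\bb y_\perp\|,|\beta|)$ and $(1,c_\xi^2)$ then yields the bound $\sqrt{1+c_\xi^4}\,\alpha_{A^{-1}}$. The point is that the large factor $1/s$ only ever multiplies the already-controlled quantity $\xi=\|A^{-1}\ket b\|$, and the choice of $s$ makes $\xi/s$ comparable to $\alpha_{A^{-1}}$.
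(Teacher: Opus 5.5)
Your proof is correct and complete. The paper gives no proof of this lemma; it recalls the properties and defers to \cite{Low2026quantumlinearsystem} for the preconditioning analysis, so your argument works as a self-contained verification of every claim.

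\begin{itemize}
\item The identity $S\ket b=s\ket b$ gives $\ket{Sb}=\ket b$ and $\xi_{SA}=\xi/s$.
\item The two-sided bound on $s$ converts directly into $\alpha_{A^{-1}}\le\xi_{SA}\le c_\xi^2\alpha_{A^{-1}}$.
\item For \eqref{property3}, splitting the input along $\ket b$ and its orthogonal complement is exactly what produces the constant $\sqrt{1+c_\xi^4}$. Equivalently, $(SA)^{-1}=A^{-1}(I-\ket b\bra b)+s^{-1}A^{-1}\ket b\bra b$ is a sum of two operators with orthogonal input spaces. Hence $\|(SA)^{-1}\|^2\le\alpha_{A^{-1}}^2+(\xi/s)^2$.
\end{itemize}

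Two points you use implicitly are fine. First, $s>0$ because $\xi_c\ge\xi/c_\xi>0$. Second, nothing in your argument needs $A$ to be Hermitian. This matches the lemma's hypothesis that $A$ is merely invertible, and its later use for the non-Hermitian $SA$.
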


Combining these properties with Theorem~\ref{thm:solver-queries},
we obtain the following result.
\begin{theorem}[Linear query complexity]\label{thm:optimal-queries}
Under the access assumptions of Theorem~\ref{thm:solver-queries}
and the conditions of Lemma~\ref{lem:Sproperty}, for
$0<\varepsilon\le1/2$ there exists a quantum algorithm that prepares
an $\varepsilon$-approximation to $\ket x$ in $\ell^2$ norm, with a
success flag of probability $\Omega(1)$, using
\begin{equation}\label{oplog575}
\mathcal O\!\left(\kappa_A\log\frac1\varepsilon\right)
\end{equation}
queries to each original oracle $U_A$ and $O_b$.
\end{theorem}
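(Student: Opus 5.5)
We will apply Theorem~\ref{thm:solver-queries} to the preconditioned system \eqref{SASb} rather than to $A\bb x=\bb b$, and then translate its query counts back to the original oracles. By Lemma~\ref{lem:Sproperty}, the normalized solution of \eqref{SASb} is $\ket x$, so an $\varepsilon$-approximation for the preconditioned problem is already what we need.

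The right-hand side is $S\bb b$ with $\ket{Sb}=\ket b$, so its preparation oracle is $O_b$ itself. For the matrix, the block encoding of $S$ from \cite{Low2026quantumlinearsystem} combined with $U_A$ gives an $\alpha_A$-normalized block encoding $U_{SA}$. Each call to $U_{SA}$ costs $\mathcal O(1)$ calls to $U_A$ and $\mathcal O(1)$ calls to $O_b$ or $O_b^\dagger$. The Hermitian dilation step inside Theorem~\ref{thm:solver-queries} adds only another constant overhead.

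Next we fix the norm bounds supplied to the solver. By \eqref{property2} we keep $\alpha_{SA}=\alpha_A$, and by \eqref{property3} we take $\alpha_{(SA)^{-1}}=\sqrt{1+c_\xi^4}\,\alpha_{A^{-1}}$. Hence $\kappa_{SA}=\sqrt{1+c_\xi^4}\,\kappa_A=\Theta(\kappa_A)$.

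Theorem~\ref{thm:solver-queries} also needs a constant-factor estimate of $\xi_{SA}=\xi/s$. We take $\xi_c/s$, which is within a factor $c_\xi$ of $\xi_{SA}$ because $s$ is known exactly. Alternatively, \eqref{property1} shows that $\alpha_{A^{-1}}$ itself is a valid constant-factor estimate.

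The key quantity in \eqref{log575} is $\kappa_{SA}/(\xi_{SA}\alpha_{SA})$. From \eqref{property1}, $\xi_{SA}\ge\alpha_{A^{-1}}$, so
\[
\frac{\kappa_{SA}}{\xi_{SA}\alpha_A}\le\frac{\sqrt{1+c_\xi^4}\,\alpha_A\alpha_{A^{-1}}}{\alpha_{A^{-1}}\alpha_A}=\sqrt{1+c_\xi^4}=\mathcal O(1).
\]
This ratio is also at least $1$, as noted after Theorem~\ref{thm:pdiscretization}, so it is $\Theta(1)$.

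Substituting into \eqref{log575} gives
\[
Q_{SA}=\mathcal O\!\left(\kappa_{SA}\cdot\mathcal O(1)\cdot\log\frac{\mathcal O(1)}{\varepsilon}\right)=\mathcal O\!\left(\kappa_A\log\frac1\varepsilon\right)
\]
queries to $U_{SA}$, and $\mathcal O(1)$ direct queries to $O_b$ for the right-hand side.

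Each $U_{SA}$ query costs a constant number of $U_A$ and $O_b$ calls. Therefore both $U_A$ and $O_b$ are used $\mathcal O(\kappa_A\log(1/\varepsilon))+\mathcal O(1)$ times, which is \eqref{oplog575}. The success probability and the $\ell^2$ error bound are inherited directly from Theorem~\ref{thm:solver-queries}.

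The main obstacle is accounting cleanly for the $O_b$ queries hidden inside $U_{SA}$. Naively these make $Q_b$ match $Q_A$ rather than staying at the $\mathcal O(\kappa_A/(\xi\alpha_A))$ of Theorem~\ref{thm:solver-queries}. That is acceptable here, because the claim concerns each original oracle at the common rate $\mathcal O(\kappa_A\log(1/\varepsilon))$.

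A second point to check is that Theorem~\ref{thm:solver-queries} is valid for non-Hermitian $SA$; this is handled by its dilation step. We must also confirm that $\alpha_{(SA)^{-1}}$ and the estimate of $\xi_{SA}$ are computable from the supplied data $\xi_c$, $c_\xi$, $\alpha_A$ and $\alpha_{A^{-1}}$, so that the solver's parameter choices (through $r$ and $\varepsilon_{\rm ODE}$) are admissible.
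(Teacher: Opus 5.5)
Your proposal is correct and follows the paper's proof essentially step for step. It uses the same norm bounds $\alpha_{SA}=\alpha_A$ and $\alpha_{(SA)^{-1}}=\sqrt{1+c_\xi^4}\,\alpha_{A^{-1}}$, and the same use of \eqref{property1} to supply the $\xi_{SA}$ estimate and to show $\kappa_{SA}/(\xi_{SA}\alpha_{SA})=\Theta(1)$; your bound via $\xi_{SA}\ge\alpha_{A^{-1}}$ is the paper's computation with $\xi_{SA}=\xi/s$ and \eqref{sChoose}, written differently. It then substitutes into \eqref{log575} and translates back to $U_A$ and $O_b$ with constant overhead, exactly as the paper does.
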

\begin{proof}
By Lemma~\ref{lem:Sproperty}, we may take
\[
\alpha_{SA}=\alpha_A,\qquad
\alpha_{(SA)^{-1}}=\sqrt{1+c_\xi^4}\,\alpha_{A^{-1}},\qquad
\kappa_{SA}=\sqrt{1+c_\xi^4}\,\kappa_A.
\]
The bounds on $\xi_{SA}$ also provide the constant-factor estimate
required by Theorem~\ref{thm:solver-queries}, which applies to $SA$
whether or not it is Hermitian. Since $\xi_{SA}=\xi/s$, the choice
\eqref{sChoose} gives
\[
\frac{\kappa_{SA}}{\xi_{SA}\alpha_{SA}}
=\sqrt{1+c_\xi^4}\,\frac{s\kappa_A}{\xi\alpha_A}
=\Theta(1).
\]
Substitution into \eqref{log575} yields
\[
Q_{SA}=\mathcal O\!\left(
\frac{\kappa_{SA}^2}{\xi_{SA}\alpha_{SA}}
\log\frac{\kappa_{SA}}{\xi_{SA}\alpha_{SA}\varepsilon}\right)
=\mathcal O\!\left(\kappa_A\log\frac1\varepsilon\right),
\qquad Q_{Sb}=\mathcal O(1).
\]
Each query to the working matrix uses a constant number of queries
to $U_A$ and $O_b$, while $O_{Sb}=O_b$. This proves the stated
complexity for both original input oracles.
\end{proof}

\begin{remark}
Theorem~\ref{thm:optimal-queries} assumes a valid constant-factor
estimate $\xi_c$. Solution-norm estimation is discussed in
\cite{Costa2022QLSA,dalzell2026shortcutoptimalquantumlinear}; when this
estimate is not supplied, its query cost and confidence must be
accounted for separately.
\end{remark}

\section{Numerical methods and additional results}\label{sec:numerics}
We use UnitaryLab~\cite{UnitaryLabSoftware} to test the periodic initialization and weighted
recovery of \cite{DMPYSchrodingerizationODE} for our QLSP construction.
We examine the recovered solutions, discretization errors, and
the probability gain from interval recovery.

\paragraph{Test systems and discretization.}
We consider a positive-definite system (SPD2) and an indefinite system
with complex data (Asym2):
\[
A_+=\begin{bmatrix}3/2&-1/2\\-1/2&3/2\end{bmatrix},\quad
\bb b_+=\begin{bmatrix}1\\0\end{bmatrix},\qquad
A_{\rm a}=\begin{bmatrix}1&2\\2&1\end{bmatrix},\quad
\bb b_{\rm a}=\frac{2}{\sqrt5}
\begin{bmatrix}1\\(3+4\i)/10\end{bmatrix}.
\]
Their eigenvalues are $\{1,2\}$ and $\{-1,3\}$, respectively.
We also test a four-dimensional discrete Poisson system (Poisson4),
with $A_P=K/\lambda_{\min}(K)$ and
$K=\operatorname{tridiag}(-1,2,-1)\in\mathbb R^{4\times4}$.
We choose its right-hand side proportional to
$[\sin(\pi j/5)+\tfrac12\sin(2\pi j/5)+\tfrac14\sin(3\pi j/5)]_{j=1}^4$,
so several eigenmodes contribute to the solution. We normalize every
right-hand side to $\|\bb b\|=1$.
All three systems have $\|A^{-1}\|=1$, and we take $T=5$.
Unless varied explicitly, we use $r=8$, $R=\alpha_AT+4$ with
$\alpha_A=\|A\|$, and the values of $N_p$ in
Table~\ref{tab:qlsp-dissipative}.
We initialize the $p$-register directly in
$\widetilde{\bb\psi}_{\Pi}/\|\widetilde{\bb\psi}_{\Pi}\|$ using
\eqref{fourier-source-vector} and evolve the ODE in Fourier coordinates.
After applying $\Phi$, we use the nodes indexed by $\mathcal R_p$
for interval recovery.

For the Schr\"odingerization variable $q$, we apply the autonomous
solver to the shifted generator $\widehat A_f$ defined above. Following \cite[Section~4]{DMPYSchrodingerizationODE},
we use the periodic profile and recovery weight
\[
\Psi_q(q)=\sum_{j\in\mathbb Z}\psi_q(q+j\ell_q),\qquad
\psi_q(q)=\frac{\e^{-q}}2
\bigl(1+\operatorname{erf}(2\sqrt L\,q)\bigr),
\]
\[
\chi_q(q)=\frac{2\e^{-q}}{\e^{-1}-\e^{-2}}
\mathbf1_{[1/2,1]}(q),
\]
on $[-L,\ell_q-L)$, where $\ell_q=2+2L$.
We set $L=6$ and $N_q=128$. Let $\mathcal P_q$ be the orthogonal
projection onto the modes
\[
\phi_l^{(q)}(q)=N_q^{-1/2}\e^{\i\nu_l(q+L)},\qquad
\nu_l=\frac{2\pi(l-N_q/2)}{\ell_q},\qquad 0\le l<N_q.
\]
We denote the projected profiles by
\[
\Psi_{q,\Pi}=\mathcal P_q\Psi_q,\qquad
\chi_{q,\Pi}=\mathcal P_q\chi_q,
\]
and prepare the Fourier coefficients of $\Psi_{q,\Pi}$ directly
from their analytic expression.
With $\Delta q=\ell_q/N_q$, the continuous $L^2$ norm of each
projected $q$-profile equals $\sqrt{\Delta q}$ times the Euclidean
norm of its coefficient vector. We include this factor in the
reconstruction below. After evolution, we use the centred Fourier
transform and inverse state preparation to project
onto the normalized $\chi_{q,\Pi}$, select the upper augmentation block,
and project the $p$-register onto $\ket{\eta_p}$.
As for $\Phi$, each centred Fourier transform is implemented by a
QFT followed by the phase $(-1)^j$. This gives a pure system state.

\paragraph{Circuit implementation and error measures.}
We run UnitaryLab 1.1.6 with its NumPy backend in double precision.
In the two Fourier registers, the Hamiltonian is
\[
\mathcal H=\mathcal K+\mathcal J+\mathcal C,\qquad
\mathcal K=-I_{N_q}\otimes D_\mu\otimes\ket0\bra0\otimes A,
\]
\[
\mathcal J=D_\nu\otimes I_{N_p}\otimes\frac{X-I}{2T}\otimes I,
\qquad
\mathcal C=-I_{N_q}\otimes I_{N_p}\otimes\frac{Y}{2T}\otimes I,
\]
where $X,Y$ are Pauli matrices on the augmentation qubit and
$D_\nu=\diag(\nu_0,\ldots,\nu_{N_q-1})$.
For $m$ steps with $\Delta t=T/m$, we implement
\[
\e^{-\i\mathcal HT}\approx
\left(
\e^{-\i\mathcal K\Delta t/2}\e^{-\i\mathcal J\Delta t/2}
\e^{-\i\mathcal C\Delta t}
\e^{-\i\mathcal J\Delta t/2}\e^{-\i\mathcal K\Delta t/2}
\right)^m.
\]
Binary frequency encoding reduces these factors to controlled
small-system unitaries, phase gates, and single-qubit rotations;
for these explicit matrices, the controlled exponentials of $A$
are supplied as small unitary gates.
The circuit includes both auxiliary state preparations and their
recovery operations. As in the numerical experiment of
\cite{DMPYSchrodingerizationODE}, this product-formula implementation
tests the construction and recovery; it does not test the
Hamiltonian-simulation query bound.

Let $\bb y$ be the unnormalized successful branch of the circuit.
We restore the known preparation and recovery factors to obtain
\[
\bb x^{\,a}=C_{\rm rec}\bb y,\qquad
C_{\rm rec}=\frac{\e^{1/2}T\|\bb\psi_{\Pi}\|
\|\Psi_{q,\Pi}\|_{L^2}\|\chi_{q,\Pi}\|_{L^2}}{\sqrt{M_p}}.
\]
The factor $\e^{1/2}$ reverses the dissipative shift; no reference
solution norm is used in this reconstruction. Against the direct
linear solve, we report
\[
E_{\rm vec}=\frac{\|\bb x^{\,a}-\bb x\|}{\|\bb x\|},\qquad
E_{\rm state}=\left\|
\frac{\bb x^{\,a}}{\|\bb x^{\,a}\|}
-\frac{\bb x}{\|\bb x\|}\right\|,\qquad
\mathbb P_p=\|\bb y\|^2.
\]
We retain complex phases when computing both errors.
Independent matrix exponentiation of the Fourier-mode Hamiltonians
provides a reference for separating product-formula error from
profile and spatial errors. Direct quadrature verifies the analytic
source coefficients and their normalization to below $10^{-12}$.
Random-state checks of the circuit
steps and recovery agree with independent matrix calculations
to below $10^{-12}$; the largest measured state-preparation discrepancy
is $4.41\times10^{-9}$.

\paragraph{Accuracy and refinement.}
Table~\ref{tab:qlsp-dissipative} reports the results at $m=512$.
The examples test positive and negative eigenvalues, unequal spectral
weights, and multiple physical modes.
In Fig.~\ref{fig:qlsp-dissipative}\subref{fig:qlsp-steps}, the SPD2 evolution error
decreases from $2.98\times10^{-3}$ at $m=32$ to
$1.15\times10^{-5}$ at $m=512$, approximately by a factor of four
per doubling. The recovered-vector error approaches
$4.36\times10^{-5}$, showing that time-step refinement alone
does not remove the profile and spatial errors.
Panel~\subref{fig:qlsp-profile} examines profile and Fourier projection errors without
time-stepping error, using exact finite-Hamiltonian evolution with $N_p=256$ and
$(L,N_q)=(2,32),(4,64),(6,128),(8,256)$.
The relative vector error decreases from $5.27\times10^{-3}$ to $4.48\times10^{-6}$, while the normalized-state error
decreases from $1.31\times10^{-3}$ to $1.45\times10^{-6}$.
We also check the $p$-projection independently against the continuous
periodic convection solution. For SPD2, the maximum relative discrepancy
at nodes in $\mathcal R_p$ decreases from $1.20\times10^{-3}$ to
$3.56\times10^{-6}$ and $2.23\times10^{-15}$ as $N_p=64,128,256$.
Panel~\subref{fig:qlsp-solution} compares the real and imaginary solution components for
Asym2, including the magnitude restored from the circuit branch.

\begin{table}[tbp]
\centering\small
\caption{UnitaryLab circuit results with $r=8$, $L=6$, $N_q=128$,
and $m=512$. The probability includes both auxiliary projections and
upper-block selection. Gain denotes $\mathbb P_p/\mathbb P_{\rm point}$.}
\label{tab:qlsp-dissipative}
\begin{tabular}{lrrrrrr}
\hline
System & $N_p$ & Qubits & $E_{\rm vec}$ & $E_{\rm state}$ & $\mathbb P_p$ & Gain \\
\hline
SPD2 & 128 & 16 & $4.36\times10^{-5}$ & $1.37\times10^{-5}$ & $3.53\times10^{-3}$ & 9.00 \\
Asym2 & 256 & 17 & $4.28\times10^{-5}$ & $1.96\times10^{-5}$ & $1.90\times10^{-3}$ & 13.00 \\
Poisson4 & 512 & 19 & $5.01\times10^{-5}$ & $6.86\times10^{-6}$ & $4.03\times10^{-3}$ & 9.00 \\
\hline
\end{tabular}
\end{table}

\paragraph{Recovery probability.}
We compute probabilities from the circuit statevector before amplitude
amplification, without finite-shot sampling.
The predicted interval probability, including the dissipative shift
and the weighted $q$-recovery, is
\[
\mathbb P_p\approx
\frac{\e^{-1}M_p\|\bb x\|^2}
{T^2\|\bb\psi_{\Pi}\|^2
\|\Psi_{q,\Pi}\|_{L^2}^2\|\chi_{q,\Pi}\|_{L^2}^2}.
\]
Panel~\subref{fig:qlsp-probability} compares coherent interval recovery with selection of $p=0$
on the same evolved state, keeping the $q$-projection and upper-block
selection fixed. At $N_p=128$ and $256$, the interval probabilities are
$3.53\times10^{-3}$ and $3.73\times10^{-3}$, whereas the point
probability falls from $3.93\times10^{-4}$ to $1.96\times10^{-4}$.
The coarsest grid, $N_p=32$, has relative vector error
$7.71\times10^{-3}$; its probability alone does not indicate accurate recovery.
The gain is approximately $M_p$, illustrating why a fixed recovery
interval avoids the loss of probability caused by selecting a single
point on a refined grid. These experiments use the unpreconditioned
systems; amplitude amplification and solution-norm estimation are
not executed.

\begin{figure}[tbp]
\centering
\subfigure[Time-step refinement (SPD2).]{%
\includegraphics[width=0.48\textwidth]{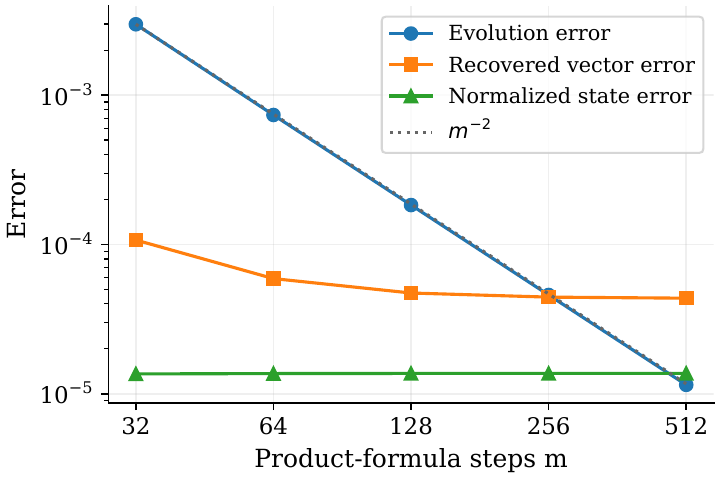}%
\label{fig:qlsp-steps}}\hfill
\subfigure[Periodic-profile refinement (SPD2).]{%
\includegraphics[width=0.48\textwidth]{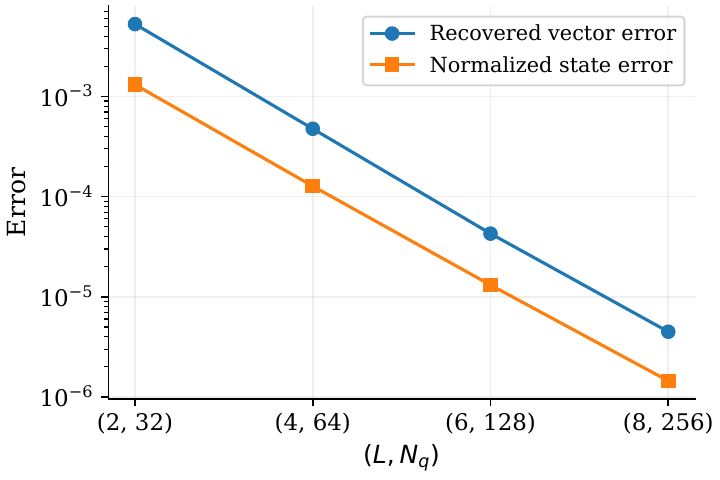}%
\label{fig:qlsp-profile}}\\
\subfigure[Recovery probabilities (SPD2).]{%
\includegraphics[width=0.48\textwidth]{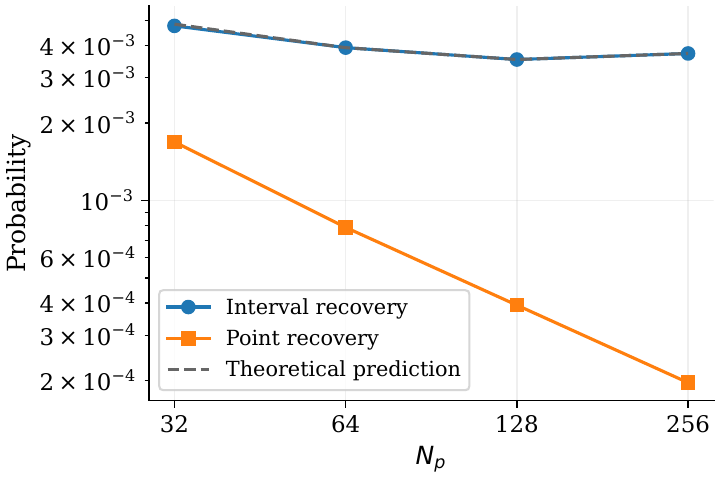}%
\label{fig:qlsp-probability}}\hfill
\subfigure[Solution components (Asym2).]{%
\includegraphics[width=0.48\textwidth]{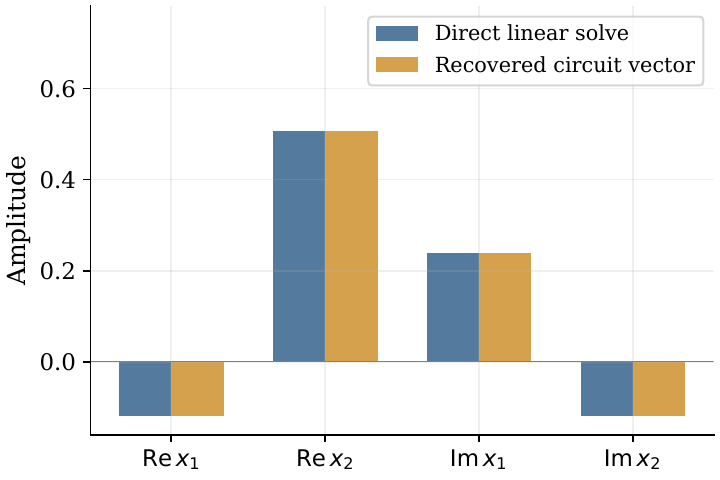}%
\label{fig:qlsp-solution}}
\caption{Validation of the dissipative Schr\"odingerization construction.
\subref{fig:qlsp-steps} Time-step refinement for SPD2 with $N_p=128$; the evolution error
is the norm difference from exact evolution of the same finite Hamiltonian.
\subref{fig:qlsp-profile} Periodic-profile refinement with exact finite-Hamiltonian evolution.
\subref{fig:qlsp-probability} Circuit recovery probabilities at $m=256$, with the other parameters fixed.
\subref{fig:qlsp-solution} Complex solution components for Asym2 at $m=512$.}
\label{fig:qlsp-dissipative}
\end{figure}
\FloatBarrier

\section{Conclusions}
We have realized Schr\"odingerization of the QLSP in two higher
dimensions. Starting from the time-integrated convection response
used in LC-Schr\"odingerization, we use Duhamel's principle to obtain
an inhomogeneous convection equation. Fourier projection and the
additional Schr\"odingerization variable then allow us to implement the
response without an LCU over evolution times.

The joint choice of the kernel and recovery procedure gives an
evolution time independent of the target accuracy and a uniformly
bounded $L^2$ kernel normalization. The resulting convection solution
approximates the target vector on a fixed interval, and coherent recovery
over that interval compensates for the sampled normalization.
Combining these estimates with the Schr\"odingerization solver
retains a single logarithmic precision factor. Under the stated
oracle assumptions and given a valid constant-factor solution-norm
estimate, block preconditioning yields
$\mathcal O(\kappa_A\log(1/\varepsilon))$ queries to each original
input oracle without VTAA, with constant success probability and
$\ell^2$ state error at most $\varepsilon$.

Our UnitaryLab experiments demonstrate recovery for positive-definite
and indefinite systems. They separate time-stepping error from
auxiliary-variable approximation errors and show the probability gain
from recovery over a fixed interval.

\bibliographystyle{unsrt} 
\bibliography{Refs2}
\end{document}